\newcommand{\onerush}{\texorpdfstring{$1\times1$}{1x1} Rush Hour}
\newcommand{\PSPACE}{PSPACE}
\newcommand{\NPSPACE}{NPSPACE}
\title{\onerush{} with Fixed Blocks is \PSPACE-complete}
\author{%
  Josh Brunner%
    \thanks{MIT Computer Science and Artificial Intelligence Laboratory,
      32 Vassar St., Cambridge, MA 02139, USA.
      \protect\url{{brunnerj,ikdc,edemaine,dylanhen,achester,avizeff}@mit.edu}}
\and
  Lily Chung\footnotemark[1]
\and
  Erik D. Demaine\footnotemark[1]
\and
  Dylan Hendrickson\footnotemark[1]
\and
  Adam Hesterberg\footnotemark[1]
\and
  Adam Suhl%
    \thanks{Algorand, Boston, MA, USA}
\and
  Avi Zeff\footnotemark[1]
}
\date{}
\newif\ifabstract
\newif\iffull
\makeatletter \hypersetup{pdftitle={\@title}}}
 \gdef\xxxmark{%
   \expandafter\ifx\csname @mpargs\endcsname\relax 
     \expandafter\ifx\csname @captype\endcsname\relax 
       \marginpar{xxx}
     \else
       xxx 
     \fi
   \else
     xxx 
   \fi}
 \gdef\xxx{\@ifnextchar[\xxx@lab\xxx@nolab}
 \long\gdef\xxx@lab[#1]#2{\textbf{[\xxxmark #2 ---{\sc #1}]}}
 \long\gdef\xxx@nolab#1{\textbf{[\xxxmark #1]}}
\gdef\fps@figure{!htbp}}
\let\realbfseries=\bfseries
\def\bfseries{\realbfseries\boldmath}
\newtheorem{theorem}{Theorem}
\newtheorem{lemma}[theorem]{Lemma}
\theoremstyle{definition}
\newtheorem{definition}[theorem]{Definition}
\numberwithin{theorem}{section}
\let\epsilon=\varepsilon
\def\emph#1{\textbf{\textit{\boldmath #1}}}
\begin{document}
\maketitle


\begin{abstract}
  Consider $n^2-1$ unit-square blocks in an $n \times n$ square board, where
  each block is labeled as movable horizontally (only), movable vertically
  (only), or immovable --- a variation of Rush Hour with only $1 \times 1$
  cars and fixed blocks.
  We prove that it is \PSPACE-complete to decide whether a given block can
  reach the left edge of the board, by reduction from Nondeterministic
  Constraint Logic via 2-color oriented Subway Shuffle.
  By contrast, polynomial-time algorithms are known for deciding whether a
  given block can be moved by one space, or when each block either is
  immovable or can move both horizontally and vertically.
  Our result answers a 15-year-old open problem by Tromp and Cilibrasi,
  and strengthens previous \PSPACE-completeness results for
  Rush Hour with vertical $1 \times 2$ and horizontal $2 \times 1$ movable blocks and 4-color Subway Shuffle.
\end{abstract}

\section{Introduction}

In a \emph{sliding block puzzle}, the player moves blocks (typically
rectangles) within a box (often a rectangle) to achieve a desired
configuration.  Such puzzles date back to the 15 Puzzle,
invented by Noyes Chapman in 1874 and popularized
by Sam Loyd in 1891 \cite{Slocum-Sonneveld-2006},
where the blocks are unit squares.
One of the first puzzles to use rectangular pieces is the Pennant Puzzle
by L. W. Hardy in 1909, popularized under the name Dad's Puzzle from 1926,
whose 10 pieces require a whopping 59 moves to solve \cite{Gardner}.
In general, such puzzles are \PSPACE-complete to solve, even for
$1 \times 2$
blocks in a square box \cite{HD05,HD}, which was the original application
for the hardness framework Nondeterministic Constraint Logic (NCL).
For unit-square pieces (as in the 15 Puzzle), such puzzles can be solved
in polynomial time, though finding a shortest solution is NP-complete
\cite{Ratner-Warmuth-1990,FifteenPuzzle_TCS}.


In the 1970s, two famous puzzle designers --- Don Rubin in the USA
and Nobuyuki ``Nob'' Yoshigahara (1936--2004) in Japan --- independently
invented \cite{Rubin-v-Apple} a new type of sliding block puzzle,
where each block can move only horizontally or only vertically.
The motivation is to imagine each block as a car that can drive forward
and reverse, but cannot turn; the goal is to get one car (yours) to ``escape''
by reaching a particular edge of the board.
The original forms --- Rubin's ``Parking Lot'' \cite{rubin-parking}
and Nob's ``Tokyo Parking'' \cite{nob-parking} ---
imagined a poor parking-lot attendant trying to extract a car.
Binary Arts (now ThinkFun) commercialized Nob's $6 \times 6$ puzzles
as \emph{Rush Hour} in 1996, where a driver named Joe is
``figuring things out on their way to the American Dream''
\cite{evolution}.
The physical game design led to a design patent \cite{design-patent}
and many variations by ThinkFun since \cite{evolution}.%
\footnote{Sadly, to our knowledge, Rush Hour the puzzle was not an inspiration
  for Rush Hour the 1998 buddy cop film starring Jackie Chan and Chris Tucker.}
Computer implementations of the game at one point led to a lawsuit against
Apple and an app developer \cite{Rubin-v-Apple}.


The complexity of Rush Hour was first analyzed by Flake and Baum in 2002
\cite{FB}.
They proved that the game is \PSPACE-complete with the original piece types ---
$1 \times 2$ and $1 \times 3$ cars, which can move only in their long direction
--- when the goal is to move one car to the edge of a square board.
In 2005, Tromp and Cilibrasi \cite{TC} strengthened this result to use just
$1 \times 2$ cars (which again can move only in their long direction),
using NCL.
Hearn and Demaine \cite{HD,Hearn} simplified this proof, and proved analogous
results for triangular Rush Hour, again using NCL.
In 2016, Solovey and Halperin \cite{SoloveyH16} proved that Rush Hour is also
PSPACE-complete with $2 \times 2$ cars and immovable $0 \times 0$ (point)
obstacles.%
\footnote{Solovey and Halperin \cite{SoloveyH16} state their result in terms
  of unit-square cars amidst polygonal obstacles, but crucially allow the
  cars to be shifted by half of the square unit.  Phrased as cars aligned on
  a unit grid, these cars are effectively $2 \times 2$.}
Unlike $1 \times 2$ cars, which have an obvious
direction of travel (the long direction), $2 \times 2$ cars need to have a
specified direction, horizontal or vertical.

Back in 2002, Hearn, Demaine, and Tromp \cite{HD05,TC}%
\footnote{The open problem was first stated in the ICALP 2002 version of
  \cite{HD05}, based on discussions with John Tromp, as mentioned in \cite{TC},
  which is cited in the journal version of \cite{HD05}.}
raised a curious open problem:
might $1 \times 1$ cars suffice for \PSPACE-completeness of Rush Hour?
Like $2 \times 2$ cars, each $1 \times 1$ car has a specified direction,
horizontal or vertical.
This \emph{\onerush} problem behaves fundamentally differently: deciding
whether a specified car can move at all is polynomial time \cite{HD05,HD,TC},
whereas the analogous questions for $1 \times 2$ or $2 \times 2$ Rush Hour
(or for $1 \times 2$ sliding blocks)
are PSPACE-complete \cite{HD05,HD}.
Tromp and Cilibrasi \cite{TC} exhaustively searched all {\onerush} puzzles of a
constant size, and found that the length of solutions grew rapidly,
suggesting exponential-length solutions;
for example, the hardest $6 \times 6$ puzzle requires 732 moves.
They also suggested a variant where some cars cannot move at all
(perhaps they ran out of gas?), which we call \emph{fixed blocks}
by analogy with pushing block puzzles \cite{Push2F_CCCG2002},%
%
\footnote{Tromp and Cilibrasi \cite{TC} refer to {\onerush}
  as ``Unit (Size) Rush Hour'' and the fixed-block variant as
  ``Walled Unit Rush Hour''.}
as potentially easier to prove hard.


In this paper, we settle the latter open problem by Tromp and Cilibrasi
\cite{TC} by proving that {\onerush} with fixed blocks is \PSPACE-complete.
This result is the culmination of many efforts to try to resolve this problem
since it was posed in 2005; see the Acknowledgments.

Our reduction starts from NCL, and reduces
through another related puzzle game, Subway Shuffle.
In his 2006 thesis, Hearn \cite{Hearn,HD} introduced this type of puzzle as a
generalization of {\onerush}, again to help prove it hard.
Subway Shuffle involves motion planning of colored tokens on a graph
with colored edges, where the player can repeatedly move a token from one
vertex along an incident edge of the same color to an empty vertex, and the
goal is to move a specified token to a specified vertex.
Despite the generalization to graphs and colored tracks, the complexity
remained open until 2015, when De Biasi and Ophelders \cite{DBO} proved it
\PSPACE-complete by a reduction from NCL.
Their proof works even when the graph is planar and uses just four colors.

We use a variant on Subway Shuffle where the graph is directed, and tokens can
travel only along forward edges.
In Section~\ref{sec:subway shuffle PSPACE}, we prove that directed
Subway Shuffle is \PSPACE-complete even with planar graphs and just two colors,
by a proof similar to that of De Biasi and Ophelders \cite{DBO}.
In Section~\ref{sec:rush hour PSPACE}, we then show that this construction
uses a limited enough set of vertices that it can actually be embedded
in the grid and simulated by {\onerush}, proving \PSPACE-completeness of
the latter with fixed blocks.
We conclude with open problems in Section~\ref{sec:conclusion}.

\section{Basics}

First we precisely define the problems introduced above.

\begin{definition}
  In \emph{Rush Hour}, we are given a square grid containing nonoverlapping \emph{cars}, which are rectangles with a specified orientation, either horizontal or vertical. A legal move is to move a car one square in either direction along its orientation, provided that it remains within the square and does not intersect another car. The goal is for a designated special car to reach the left edge of the board. We also allow \emph{fixed blocks}, which are spaces cars cannot occupy.
\end{definition}

\begin{definition}
  \emph{\onerush} is the special case of Rush Hour where each car is $1\times1$.
\end{definition} 

\begin{definition}
  In \emph{Subway Shuffle}, we are given a planar undirected graph where each edge is colored and some vertices contain a colored token. A legal move is to move a token across an edge of the same color to an empty vertex. The goal is for a designated special token to reach a designated target vertex.
\end{definition}

\begin{definition}
  In \emph{oriented Subway Shuffle}, we are given a planar directed graph where each edge is colored and some vertices contain a colored token. A legal move is to move a token across an edge of the same color, in the direction of the edge, to an empty vertex, and then flip the direction of the edge. The goal is for a designated special token to reach a designated target vertex.
\end{definition}

\begin{lemma}\label{lem:in PSPACE}
  Subway Shuffle, oriented Subway Shuffle, and Rush Hour are in \PSPACE.
\end{lemma}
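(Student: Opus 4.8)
The plan is to treat each of the three games as a reachability problem in an implicitly-defined configuration graph and invoke Savitch's theorem that $\NPSPACE = \PSPACE$. First I would argue that a single configuration can be stored in polynomial space: for Rush Hour on an $n \times n$ board, recording the position of each of the at most $n^2$ cars costs $O(n^2 \log n)$ bits; for (oriented) Subway Shuffle on a graph with $v$ vertices and $m$ edges, recording which vertex holds each token, together with (in the oriented case) the current orientation of each edge, costs $O((v+m)\log v)$ bits. In every case the encoding is polynomial in the input size.

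Next I would check that the local structure of each game is easy to compute. Given a stored configuration, one can decide in polynomial time whether it is a goal configuration (the special car occupies a cell on the left edge, or the special token sits on the target vertex) and whether a single proposed move is legal (a car slides one cell without leaving the board or overlapping another car or fixed block; a token crosses a same-colored edge in its forward direction to an empty vertex, after which the edge is reoriented). Thus both goal-testing and move-verification run in polynomial time, hence in polynomial space.

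With these two facts, membership follows from a standard nondeterministic search. A machine guesses a solution one move at a time, storing only the current configuration together with a step counter. Since each configuration uses at most some polynomial number $p(n)$ of bits, the total number of configurations is at most $2^{p(n)}$, so if a goal is reachable it is reachable within $2^{p(n)}$ moves; the counter therefore needs only $p(n)$ bits. At each step the machine nondeterministically selects a legal move, updates the stored configuration, increments the counter, and accepts if the new configuration is a goal, rejecting the branch once the counter exceeds $2^{p(n)}$. This uses only polynomial space, placing all three problems in $\NPSPACE$, and Savitch's theorem then gives membership in $\PSPACE$.

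I expect no serious obstacle, since this is the canonical membership argument; the only point requiring care is ensuring the search terminates, which is handled by capping the number of moves at the (exponential) bound on the number of configurations while storing only its polynomial-length binary counter rather than an explicit list of visited states.
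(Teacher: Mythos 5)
Your argument is correct and is essentially the same as the paper's: a nondeterministic polynomial-space search through the configuration graph, followed by Savitch's theorem. The paper states this in two sentences, while you spell out the configuration encoding, move verification, and the exponential step bound; these details are all standard and implicit in the paper's version.
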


\begin{proof}
  We can solve these problems in nondeterministic polynomial space by guessing each move, and accepting when the special car or token reaches its goal. So all three problems are contained in \NPSPACE, and by Savitch's theorem \cite{SAVITCH} they are in \PSPACE.
\end{proof}

\section{2-color Oriented Subway Shuffle is \PSPACE-complete} \label{sec:subway shuffle PSPACE}

In this section, we show that 2-color oriented Subway Shuffle is \PSPACE-complete. To do so, we reduce from nondeterministic constraint logic, which is \PSPACE-complete \cite{HD}. Our reduction is an adaption of the proof in \cite{DBO} in which the gadgets use only two colors (instead of four) and work in the oriented case.

We actually prove a slightly stronger result in Theorem~\ref{thm:ss valid}: that 2-color oriented Subway Shuffle is \PSPACE-complete even with a restricted vertex set, and with a single unoccupied vertex. A vertex is \emph{valid} if it has degree at most 3, and has at most 2 edges of a single color attached to it; these vertices are shown in Figure~\ref{fig:valid vertices}. Our proof of \PSPACE-hardness will only use valid vertices. 

\begin{figure}
  \centering
  \includegraphics[width=.6\linewidth]{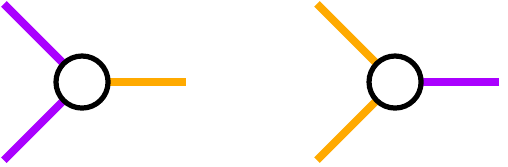}
  \caption{The valid Subway Shuffle vertices with degree 3. Every vertex with degree 1 or 2 is valid.}
  \label{fig:valid vertices}
\end{figure}

\begin{theorem}\label{thm:ss valid}
  2-color oriented Subway Shuffle with only valid vertices and exactly one unoccupied vertex is \PSPACE-complete.
\end{theorem}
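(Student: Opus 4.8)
The plan is to establish \PSPACE-hardness by reduction from Nondeterministic Constraint Logic (NCL), which is \PSPACE-complete even for planar graphs whose vertices are all AND or OR vertices and whose target question is whether a designated edge can be reversed \cite{HD}. Recall that in NCL each edge has weight 1 or 2, each vertex requires total inflow at least 2, an AND vertex joins two weight-1 edges and one weight-2 edge, and an OR vertex joins three weight-2 edges; a move reverses a single edge while preserving all inflow constraints. Membership in \PSPACE{} is already given by Lemma~\ref{lem:in PSPACE}, so only hardness remains.

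The reduction follows the template of De Biasi and Ophelders \cite{DBO}, but exploits the orientation of oriented Subway Shuffle to cut the palette from four colors to two and to work with only valid vertices. The guiding idea is a direct correspondence: the orientation of an NCL edge is encoded by the orientation of a directed Subway Shuffle edge (equivalently, by the side on which a token sits), and reversing an NCL edge corresponds to shuttling tokens --- and hence the single empty vertex --- through the associated gadget. Because a Subway Shuffle move always vacates exactly one vertex while filling another, the number of empty vertices is invariant; thus it suffices to initialize the construction with a single hole, and the entire game automatically stays in the one-hole regime. The two colors will play the role of the two NCL edge weights, and the degree and same-color restrictions of validity will be respected by splitting any higher-degree behavior across chains of degree-3 valid vertices.

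First I would build a \emph{wire} (edge) gadget that stores an orientation and transmits it, using a short oriented path of tokens in which the lone reachable empty slot can sit at one of two ends; sliding a token moves the hole from one end to the other and flips the encoded orientation, exactly mirroring an NCL edge reversal. Next I would build the \emph{AND} and \emph{OR} vertex gadgets so that the hole can complete the token shuffle realizing an output reversal only when the inflow precondition holds --- for the AND gadget, only when both weight-1 inputs point inward; for the OR gadget, only when at least one weight-2 input points inward --- and so that no other token motion is ever enabled. The inflow-$\ge 2$ constraint is enforced locally: a blocked (full) configuration leaves no color-and-direction-compatible token adjacent to the hole, so the forbidden reversal is simply not a legal move. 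I would then verify that each gadget uses only valid vertices, that adjacent gadgets share tokens and edges consistently so that orientations propagate, and that gluing them along the planar NCL graph yields a planar Subway Shuffle instance with exactly one empty vertex. Finally, I would make the special token and target vertex correspond to the NCL target edge, so that the token reaches its target if and only if the target edge can be reversed.

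The main obstacle I anticipate is the simultaneous satisfaction of all the restrictions in the AND and OR gadgets: encoding weight with only two colors, keeping every vertex valid (degree $\le 3$ and at most two same-color edges), and operating with a single global hole, all while proving that the only legal token moves are those corresponding to legal NCL moves. The delicate part is ruling out spurious shuffles --- configurations where the hole could leak through a gadget and enact an illegal reversal, or get trapped so that a legal reversal becomes unreachable --- which requires a careful case analysis of the local states of each gadget as the hole enters and exits. Establishing this local soundness and completeness for each gadget, and checking that the single hole can still route to wherever a reversal must occur, is where the real work lies; once the gadgets are correct, composition and the equivalence of the two decision problems follow routinely.
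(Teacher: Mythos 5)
Your high-level strategy matches the paper's: reduce from planar NCL, adapt the De~Biasi--Ophelders gadgets to two colors by exploiting orientation, maintain the single-hole invariant, and reserve the hard work for the gadget case analysis. But as written the proposal is a plan rather than a proof, and it has two concrete gaps beyond the unbuilt gadgets. First, you reduce from NCL with ordinary OR vertices. The paper instead reduces from NCL with \emph{protected} OR vertices (OR vertices where a global invariant guarantees two designated edges never simultaneously point inward); this is not a cosmetic choice. A protected OR has only five reachable states and four transitions, and the paper's 2-color OR gadget realizes exactly those four transitions --- in particular it does \emph{not} realize the transition between ``only the left edge in'' and ``only the right edge in,'' which an ordinary OR must support. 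Your stated goal of an OR gadget enabling the reversal ``only when at least one weight-2 input points inward,'' with all transitions among such states, is precisely what the two-color valid-vertex constraint appears to obstruct; without switching to protected OR (still \PSPACE-complete per \cite{HD}), your gadget-construction step is likely to fail.

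Second, you do not address how the single hole is routed globally. With exactly one unoccupied vertex, the bubble must be able to travel from any gadget to any other, and after flipping an edge it must end up somewhere from which it can continue. The paper handles this with specific machinery: a rooted spanning tree on the dual graph of the NCL instance, edge gadgets with a pass-through ``exit'' so the bubble can cross between faces, connector trees on each face, and edge/vertex gadgets designed as cycles so the bubble exits exactly where it entered. Your wire gadget --- a path along which the hole slides from one end to the other --- leaves the hole stranded on the far side of the edge and gives no mechanism for returning or for reaching other gadgets. Relatedly, the paper must prove (its Lemmas on edge leakage, partial cycle rotations, and loop-free excursions) that the freely wandering bubble cannot leak between adjacent gadgets through their shared vertices to effect an illegal move; you correctly flag this as the delicate point but supply no argument. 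Until the protected-OR substitution is made and the routing and leakage analysis is carried out, the reduction is not established.
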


\begin{proof}
  Containment in \PSPACE{} is given by Lemma~\ref{lem:in PSPACE}. To show hardness, we reduce from planar NCL with AND and protected OR vertices.
  
 In constraint logic, a \emph{protected OR vertex} is an OR vertex (one with three blue edges) such that two edges, due to global constraints, cannot simultaneously point towards the vertex. NCL is still \PSPACE-complete when every OR vertex is protected \cite{HD}. Because of this global constraint, there are only five possible states that a protected OR vertex can be in. In particular, there are only four possible transitions between the states of a protected OR vertex. Our OR gadget only allows these four transitions; in particular it does not allow the transition between only the leftmost edge pointing inward and only the rightmost edge pointing inward, which is the defining transition that a protected OR vertex does not have compared to a normal OR vertex.

  The Subway Shuffle instance we construct will have only a single empty vertex (other than the target vertex), called the \emph{bubble}, which moves around the graph opposite the motion of tokens. Our vertex and edge gadgets work by having the bubble enter them, move around a cycle, and then exit at the same vertex. The effect is that each edge in the cycle flips and each token in the cycle moves across one edge, except that one token by the entrance moves twice.

  The general structure of the reduction is as follows. First, we choose any rooted spanning tree on the dual graph of the constraint logic graph. This rooted spanning tree will determine the path the bubble takes to get from one vertex or edge to another. For each edge and vertex in the constraint logic graph, we will replace it with a subway shuffle gadget. The constraint logic edges which are part of our spanning tree will have a path for the bubble to cross through them. Each face of the CL graph has paths connecting vertex gadgets and edge gadgets as necessary to allow the bubble to visit each gadget.

  When playing the constructed Subway Shuffle instance, the bubble begins at the root of the spanning tree. The bubble can move down the tree by crossing edge gadgets until reaching a desired face. It then enters a vertex or edge gadget, goes around a cycle, and exits. A sequence of moves of this form corresponds to flipping a constraint logic edge or reconfiguring a vertex (that is, changing which constraint logic edge(s) are used to satisfy that vertex and therefore are locked from being flipped away from it). The bubble can always travel back up the spanning tree to the root, and from there visit any face and then any CL vertex or edge.

Now we will describe the various gadgets that implement constraint logic in Subway Shuffle. Many places in the gadget figures have an empty vertex attached to them; this represents where the gadget is connected to the spanning tree. Entering through these vertices is the only way the bubble can interact with a gadget.
  
  The edge gadget is shown in Figure~\ref{fig:Subway Shuffle edge}. The two vertices and edge at the bottom and top of the edge gadget (in a gray box in the figure) are shared with the connecting vertex gadget. The edge gadget consists of five interlocking cycles. The edge can be flipped by rotating each of the five cycles in order, as shown in Figure~\ref{fig:Subway Shuffle edge transitions}. The bubble rotates a cycle by entering at the appropriate white vertex, and then moving around the cycle, and finally exiting where it entered.

  If the edge is in the spanning tree, we include the rightmost vertex called the \emph{exit}, which allows the bubble to visit the edge gadget and pass through to face on the other side of the edge. We place the edge gadget in the orientation so that the entrance is on the face closer to the root of the spanning tree of the dual graph.

  There are two kinds of edges in constraint logic: red and blue edges. The only difference is that they have different weights for the constraint logic constraints. Blue edges are as shown in Figure~\ref{fig:Subway Shuffle edge} (and can be rotated); red edges are the same gadget, but reflected.
 
 Edge gadgets connect to vertex gadgets by sharing the two vertices and edge marked in a gray box. In the edge gadget, when the vertex colors and edge direction are as shown in the edge gadget figure, the edge is \emph{unlocked}, which means that the bubble is free to flip the direction of that edge. The vertex gadget's colors take precedence for the shared edges and vertices. When they do not match those shown in the edge gadget figure, we say the edge is \emph{locked}. When this happens, it becomes impossible for the bubble to rotate first cycle, and thus prevents the bubble from flipping the edge. This mechanism is what allows the gadgets to enforce the constraints of the vertices in the constraint logic graph. Edges are only ever locked while pointing into a vertex because all of the constraints in constraint logic only give lower bounds on the number of inward pointing edges. When an edge is pointing away from a vertex, some of the cycles in the vertex will be impossible to rotate, preventing the bubble from unlocking other edges.


  \begin{figure}
    \centering
    \begin{subfigure}{.2\linewidth}
      \centering
      \includegraphics[width=\linewidth]{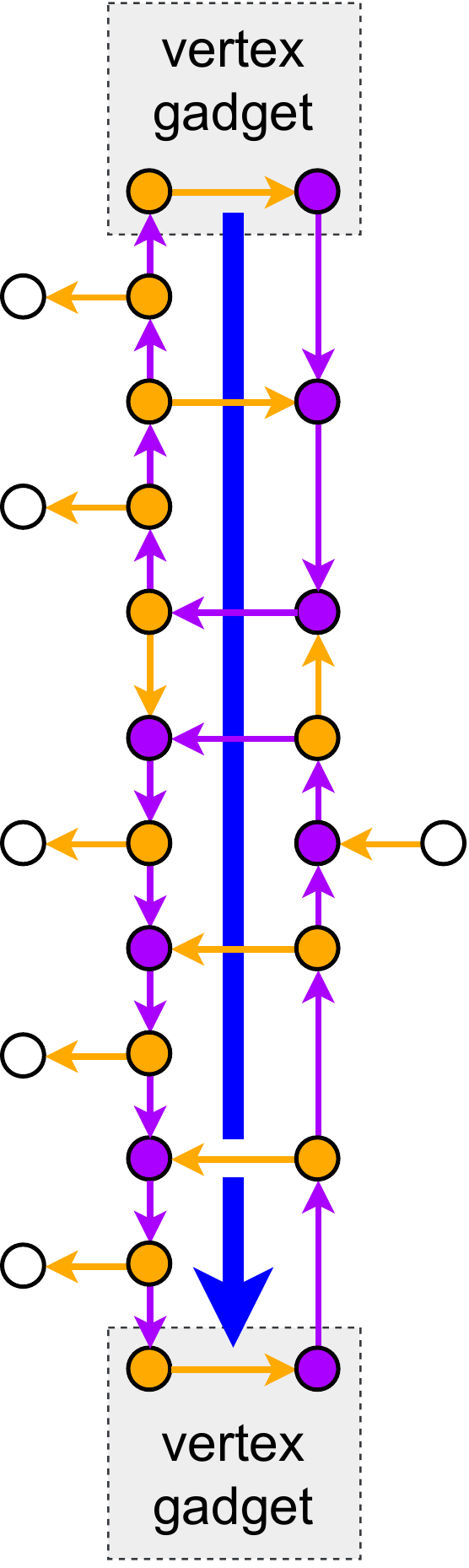}
      \caption{}
      \label{subfig:down unlocked}
    \end{subfigure}
    \hfil
    \begin{subfigure}{.2\linewidth}
      \centering
      \includegraphics[width=\linewidth]{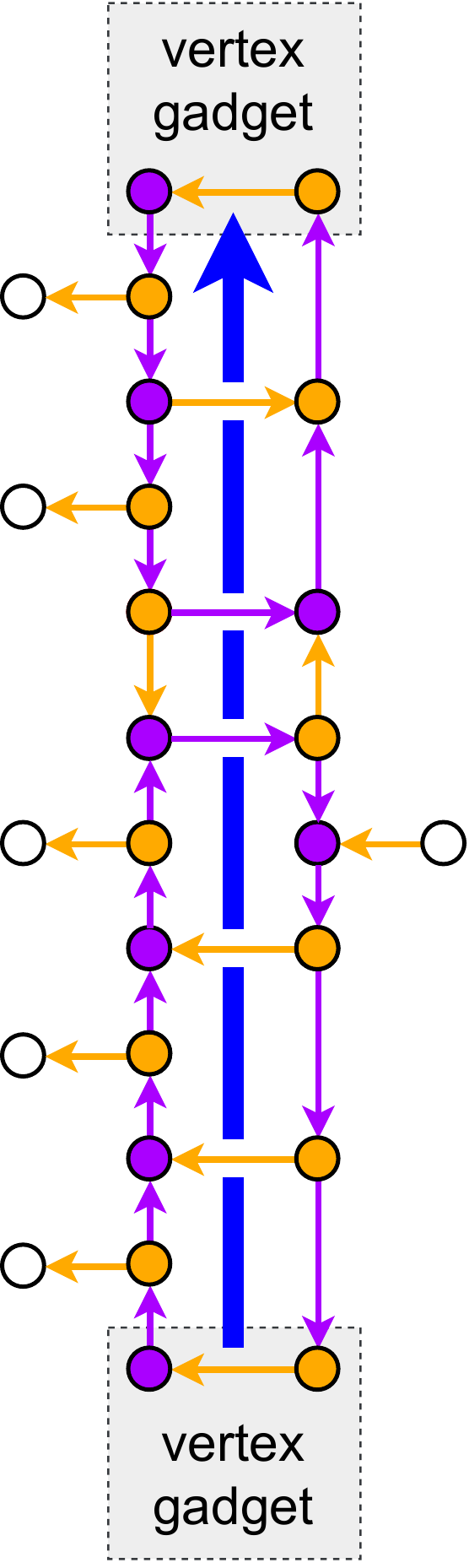}
      \caption{}
      \label{subfig:up unlocked}
    \end{subfigure}
    \hfil
    \begin{subfigure}{.2\linewidth}
      \centering
      \includegraphics[width=\linewidth]{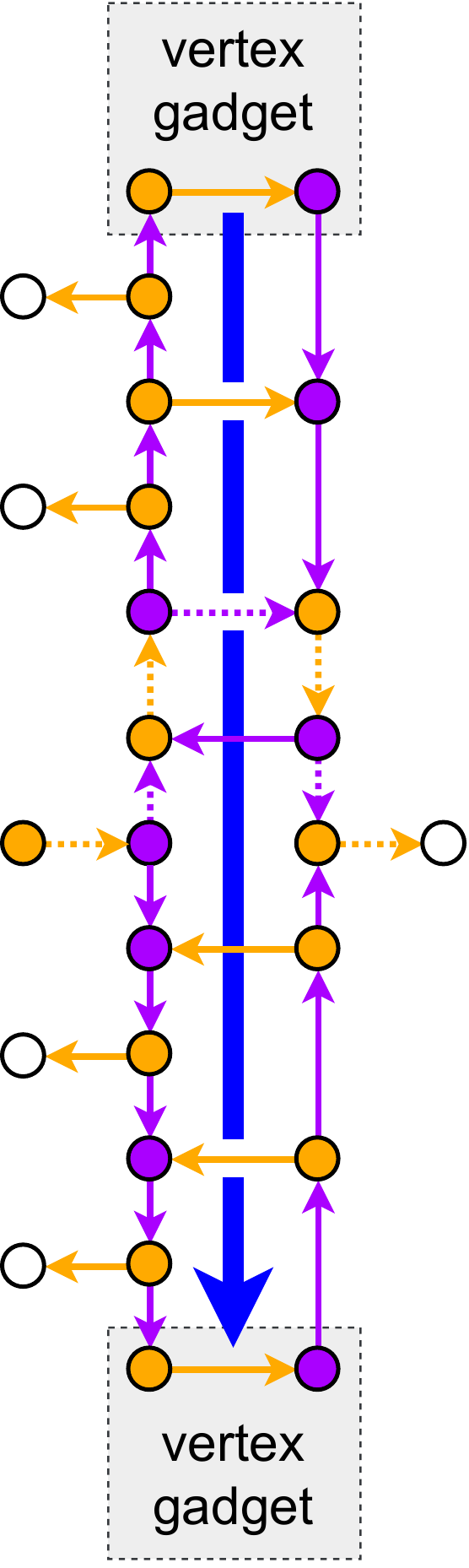}
      \caption{}
      \label{subfig:down traversed}
    \end{subfigure}
    \hfil
    \begin{subfigure}{.2\linewidth}
      \centering
      \includegraphics[width=\linewidth]{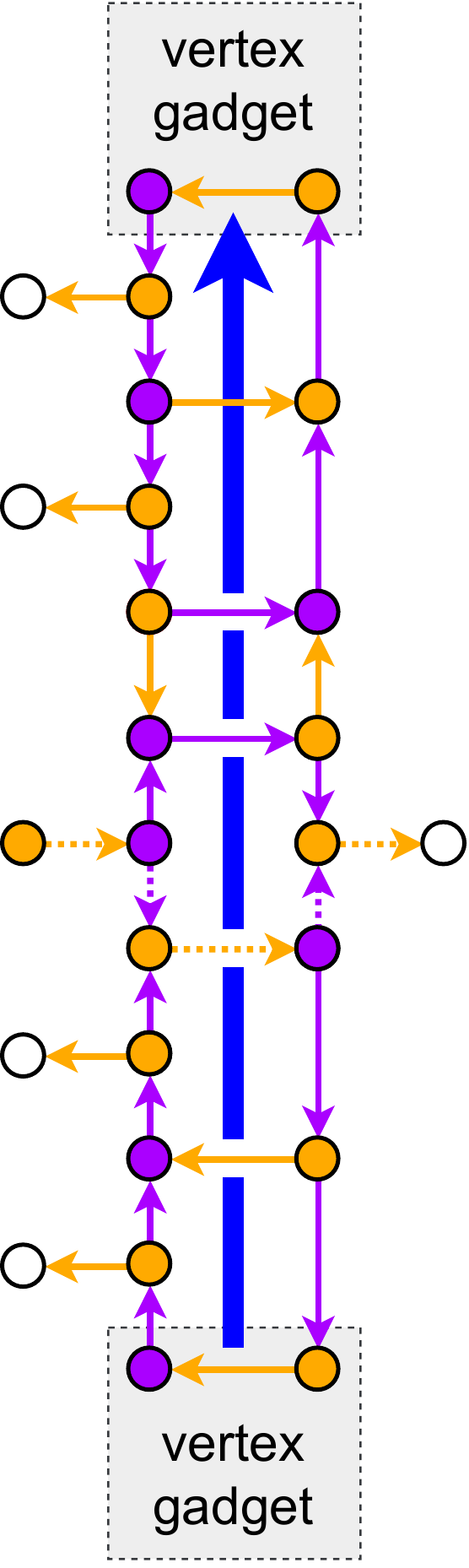}
      \caption{}
      \label{subfig:up traversed}
    \end{subfigure}
    \caption{The edge gadget for 2-color oriented Subway Shuffle, shown (a) directed down and unlocked, (b) directed up and unlocked, (c) directed down after the bubble has passed through, and (d) directed up after the bubble has passed through. This gadget is based on the edge gadget in \cite{DBO}.}
    \label{fig:Subway Shuffle edge}
  \end{figure}

  \begin{figure}
    \centering
    \begin{subfigure}{.18\linewidth}
      \centering
      \includegraphics[width=\linewidth]{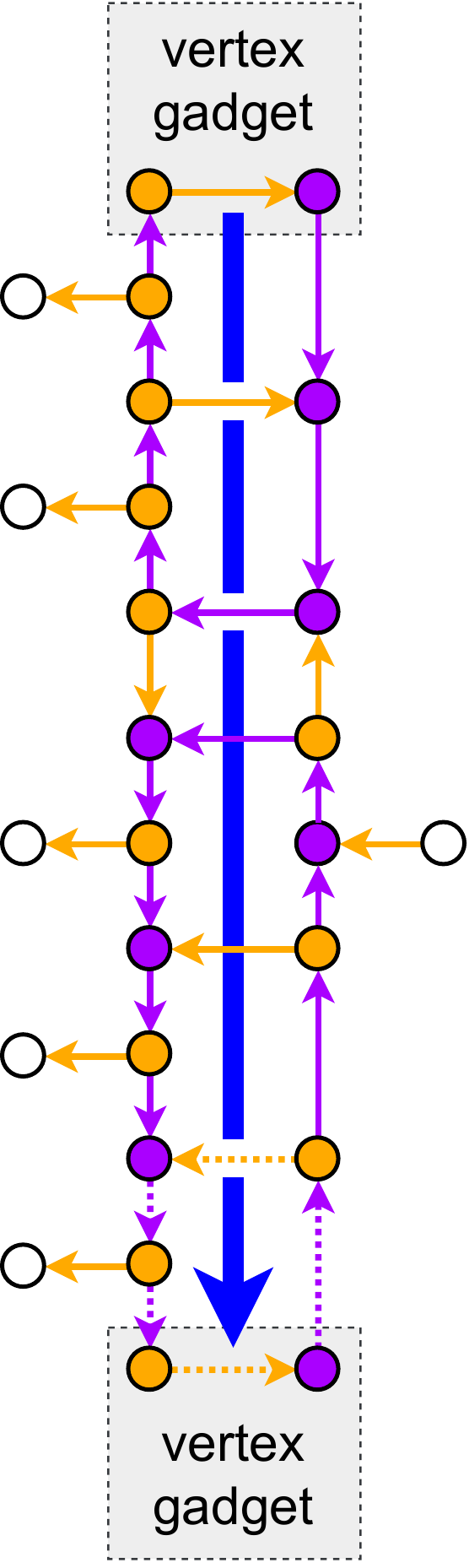}
      \caption{}
      \label{subfig:rotation1}
    \end{subfigure}
    \hfil
    \begin{subfigure}{.18\linewidth}
      \centering
      \includegraphics[width=\linewidth]{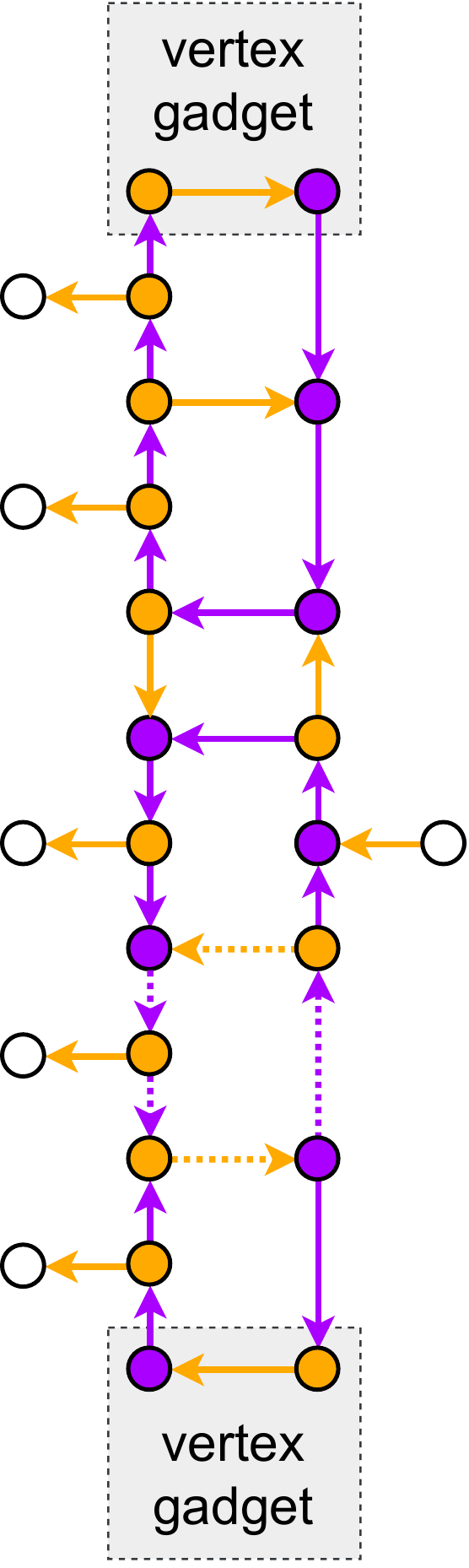}
      \caption{}
      \label{subfig:rotation2}
    \end{subfigure}
    \hfil
    \begin{subfigure}{.18\linewidth}
      \centering
      \includegraphics[width=\linewidth]{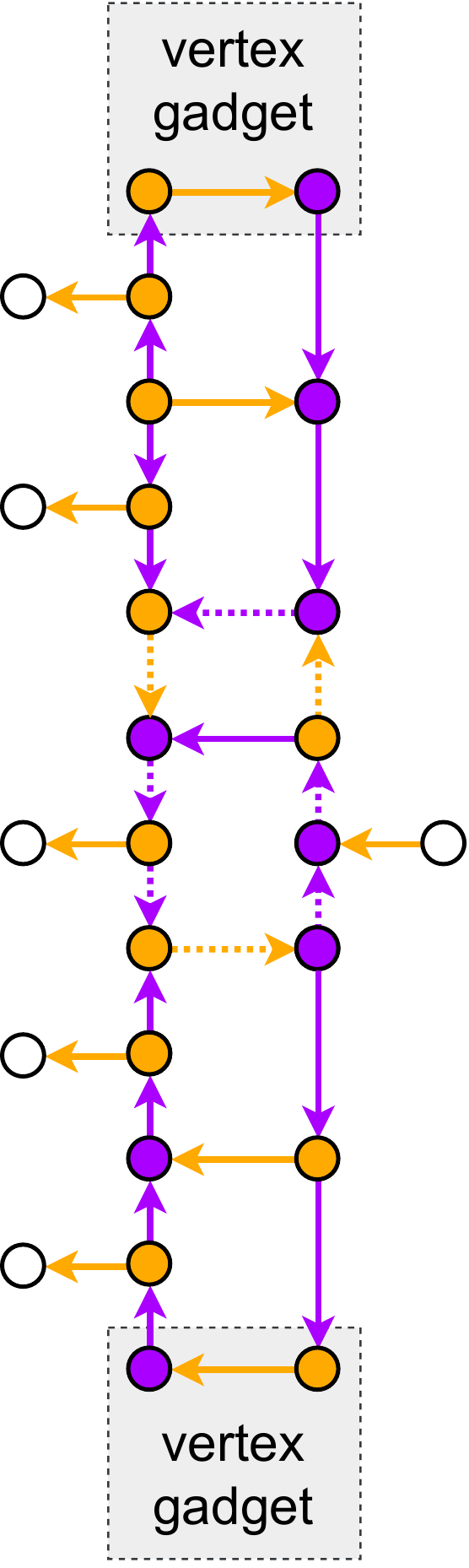}
      \caption{}
      \label{subfig:rotation3}
    \end{subfigure}
    \hfil
    \begin{subfigure}{.18\linewidth}
      \centering
      \includegraphics[width=\linewidth]{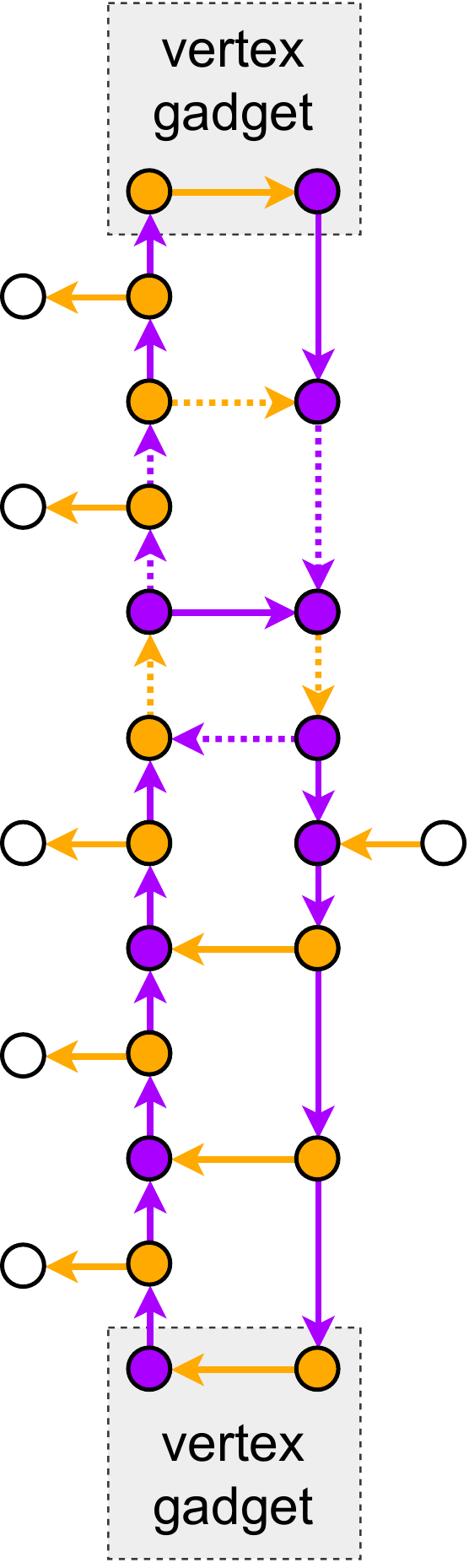}
      \caption{}
      \label{subfig:rotation4}
    \end{subfigure}
    \hfil
    \begin{subfigure}{.18\linewidth}
      \centering
      \includegraphics[width=\linewidth]{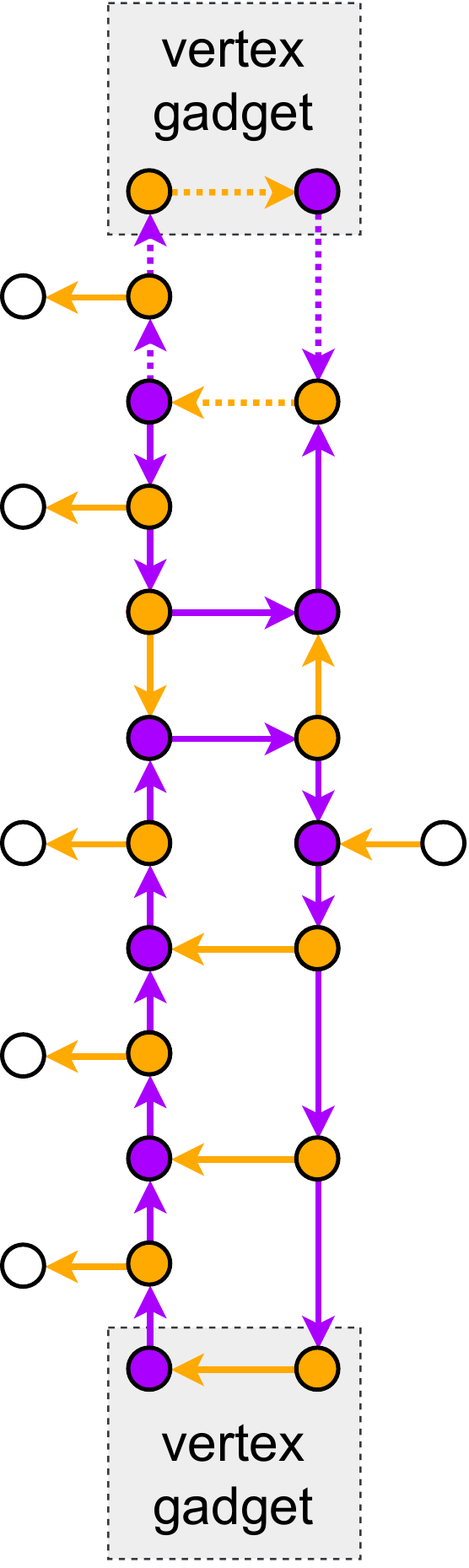}
      \caption{}
      \label{subfig:rotation5}
    \end{subfigure}
    \caption{The five cycles that the bubble rotates to flip the orientation of an edge gadget. For each cycle, the bubble enters at the white vertex, goes around the dotted cycle, and leaves where it entered. Note that the colors and orientation of the edge and vertices that connect to the vertex gadget will not always match what is shown in this figure. When they do not, we say the edge is \emph{locked} by the corresponding vertex gadget, and it is not possible to rotate the dotted cycle. }
    \label{fig:Subway Shuffle edge transitions}
  \end{figure}

  The AND vertex gadget is shown in Figure~\ref{fig:Subway Shuffle and}. Whenever the bubble is not visiting the vertex gadget, either the blue (weight 2) edge or both red (weight one) edges are locked to point towards the vertex. If all three edges are pointing towards the vertex, the bubble can visit the vertex gadget (at the top entrance) and go around the cycle to switch which edges are locked. This implements the constraints on a NCL AND vertex.

  \begin{figure}
    \centering
    \begin{subfigure}{.4\linewidth}
      \centering
      \includegraphics[width=\linewidth]{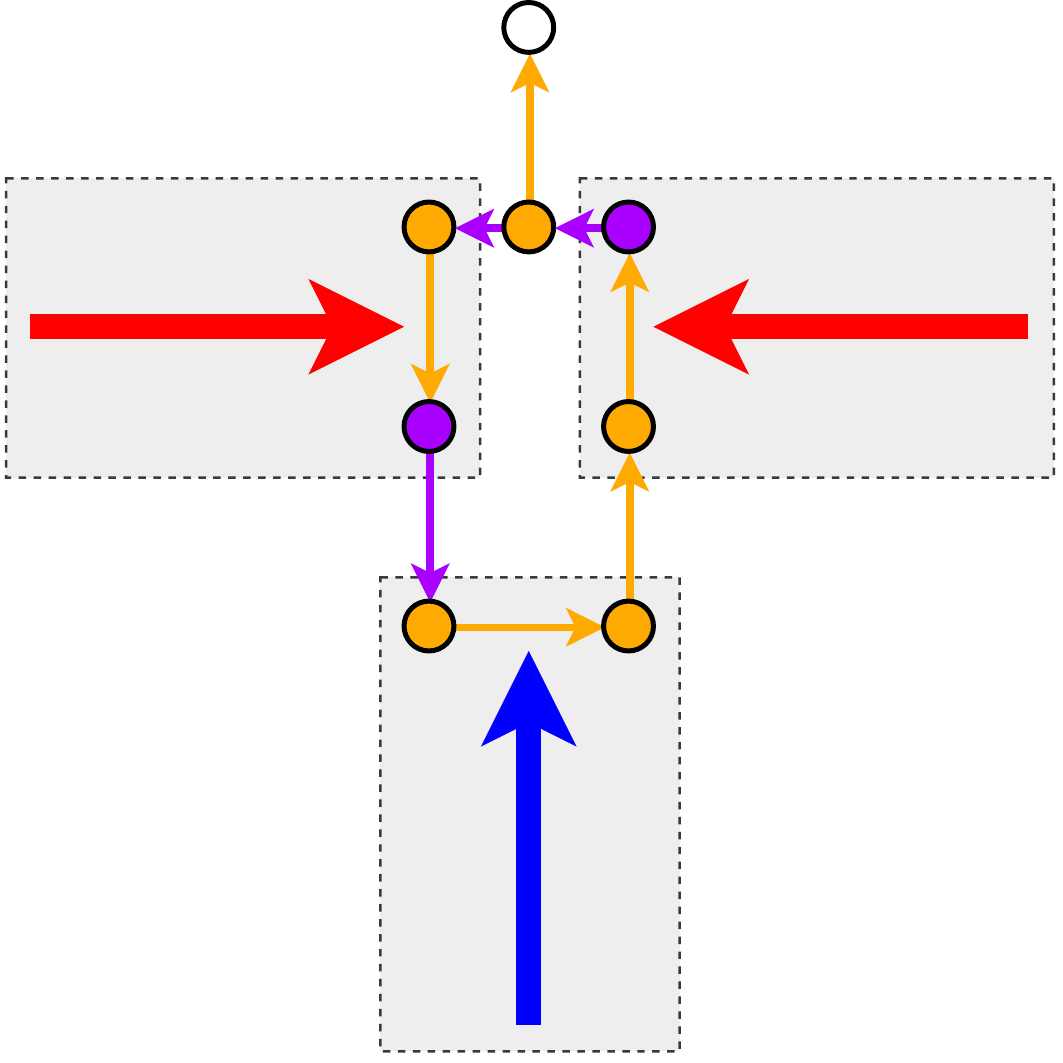}
      \caption{All edges oriented in, with the blue edge locked.}
      \label{subfig:and 1}
    \end{subfigure}
    \hfil
    \begin{subfigure}{.4\linewidth}
      \centering
      \includegraphics[width=\linewidth]{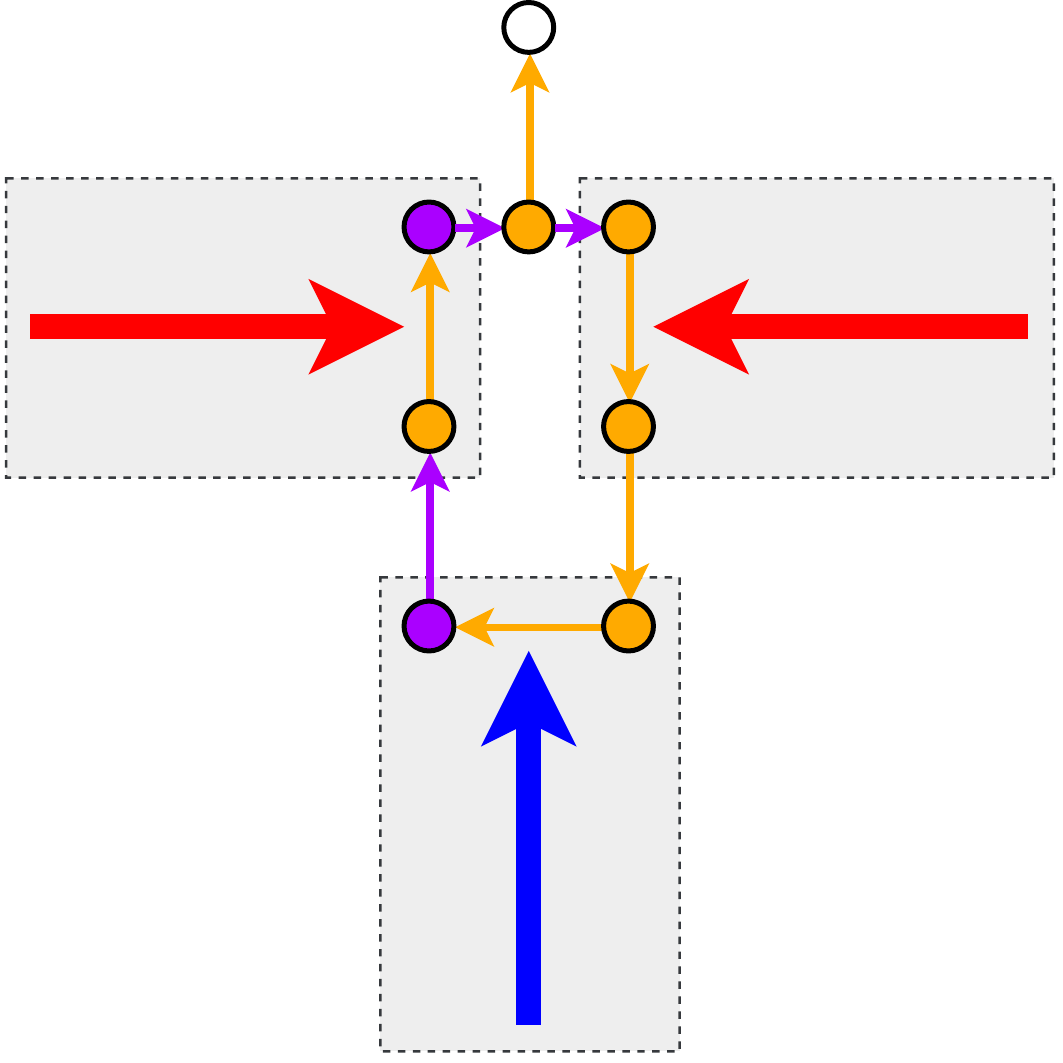}
      \caption{All edges oriented in, with both red edges locked.}
      \label{subfig:and 2}
    \end{subfigure}
    \caption{The AND vertex gadget for 2-color oriented Subway Shuffle. This gadget is based on the AND vertex gadget in \cite{DBO}.}
    \label{fig:Subway Shuffle and}
  \end{figure}



  Our protected OR vertex gadget is shown in Figure~\ref{fig:Subway Shuffle or transitions}. The two protected edges are the leftmost and rightmost edges, so we can assume that they never both point towards the vertex. The gadget has three entrances. The gadget can be in five possible states corresponding to the five possible states of a CL protected OR. In each state, the edges which are locked in correspond to the set of edges that are pointing inward in the corresponding state of a CL protected OR. In the first state, the left edge is locked, and the other two are free. In the second state, the middle edge is also locked. In the third state, only the middle edge is locked. In the fourth state, both the middle and right edges are locked. Finally, in the fifth state, only the right edge is locked. To get from one state to the next, the bubble rotates a single cycle. The fives states and the transitions between them are shown in Figure~\ref{fig:Subway Shuffle or transitions}. The only transitions between states are to the next and previous states. To transition from one state to the next, the bubble goes around the cycle indicated by the dotted edges.

  \begin{figure}
    \centering
    \begin{subfigure}{.49\linewidth}\centering
      \includegraphics[width=.95\linewidth]{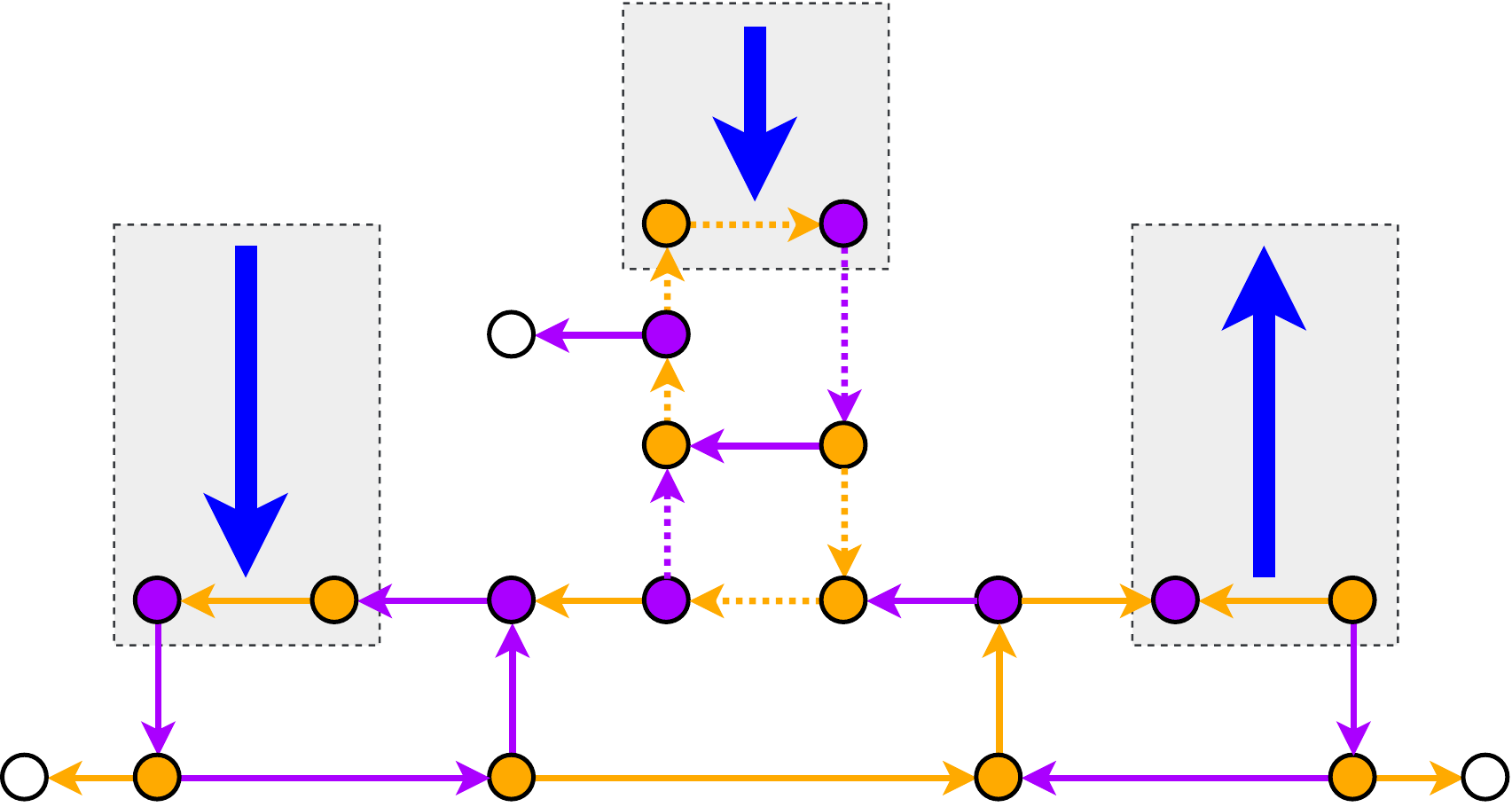}
      \caption{State 1: The left edge is locked. The middle edge is unlocked and pointing in. The right edge is pointing out.}
    \end{subfigure}\hfill
    \begin{subfigure}{.49\linewidth}\centering
      \includegraphics[width=.95\linewidth]{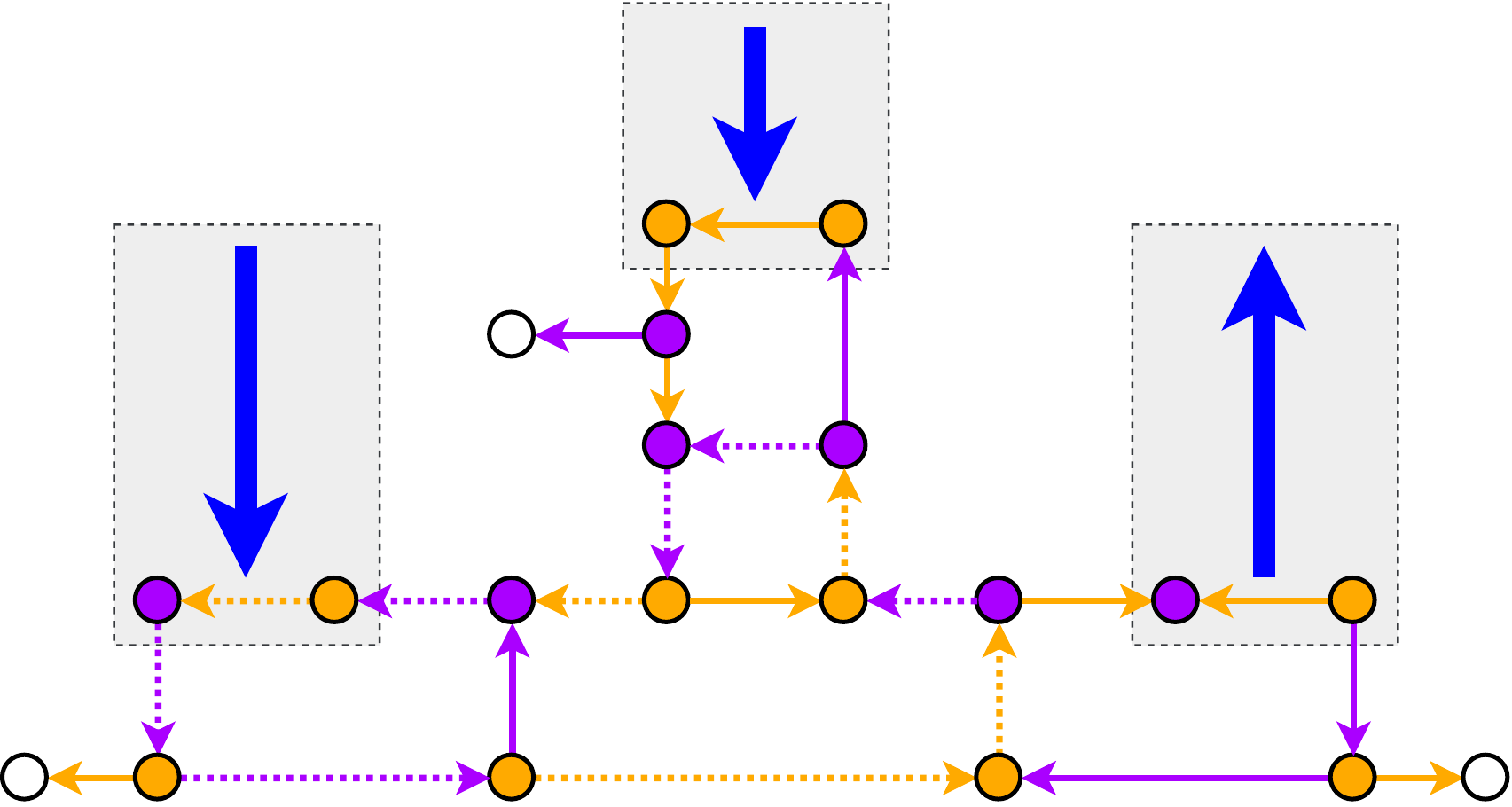}
      \caption{State 2: The left and middle edges are locked. The right edge is pointing out.\\}
    \end{subfigure}
    \begin{subfigure}{.49\linewidth}\centering
      \includegraphics[width=.95\linewidth]{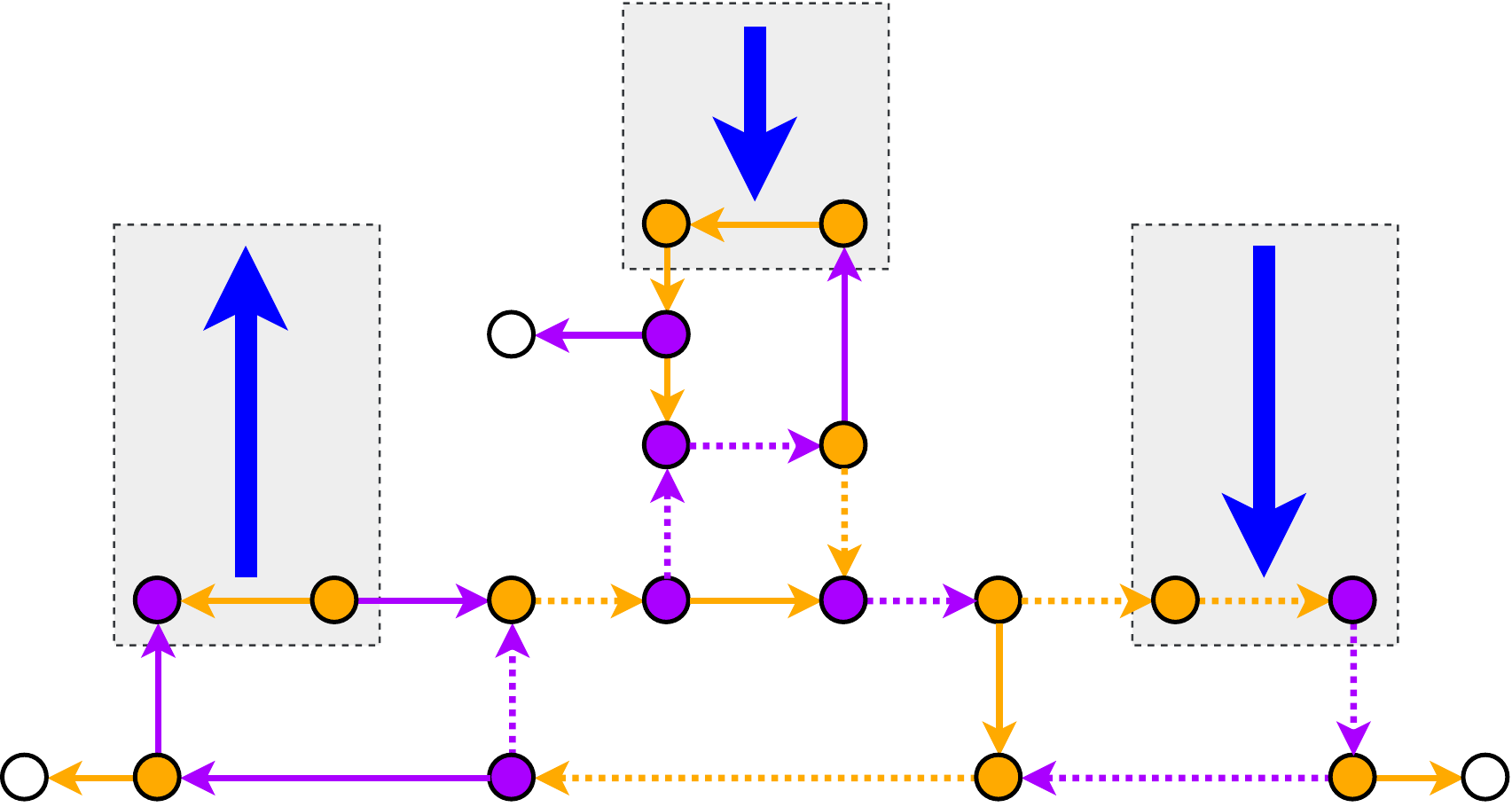}
      \caption{State 3: The left edge is pointing out. The middle edge is locked. The right edge is unlocked and pointing in.}
    \end{subfigure}\hfill
    \begin{subfigure}{.49\linewidth}\centering
      \includegraphics[width=.95\linewidth]{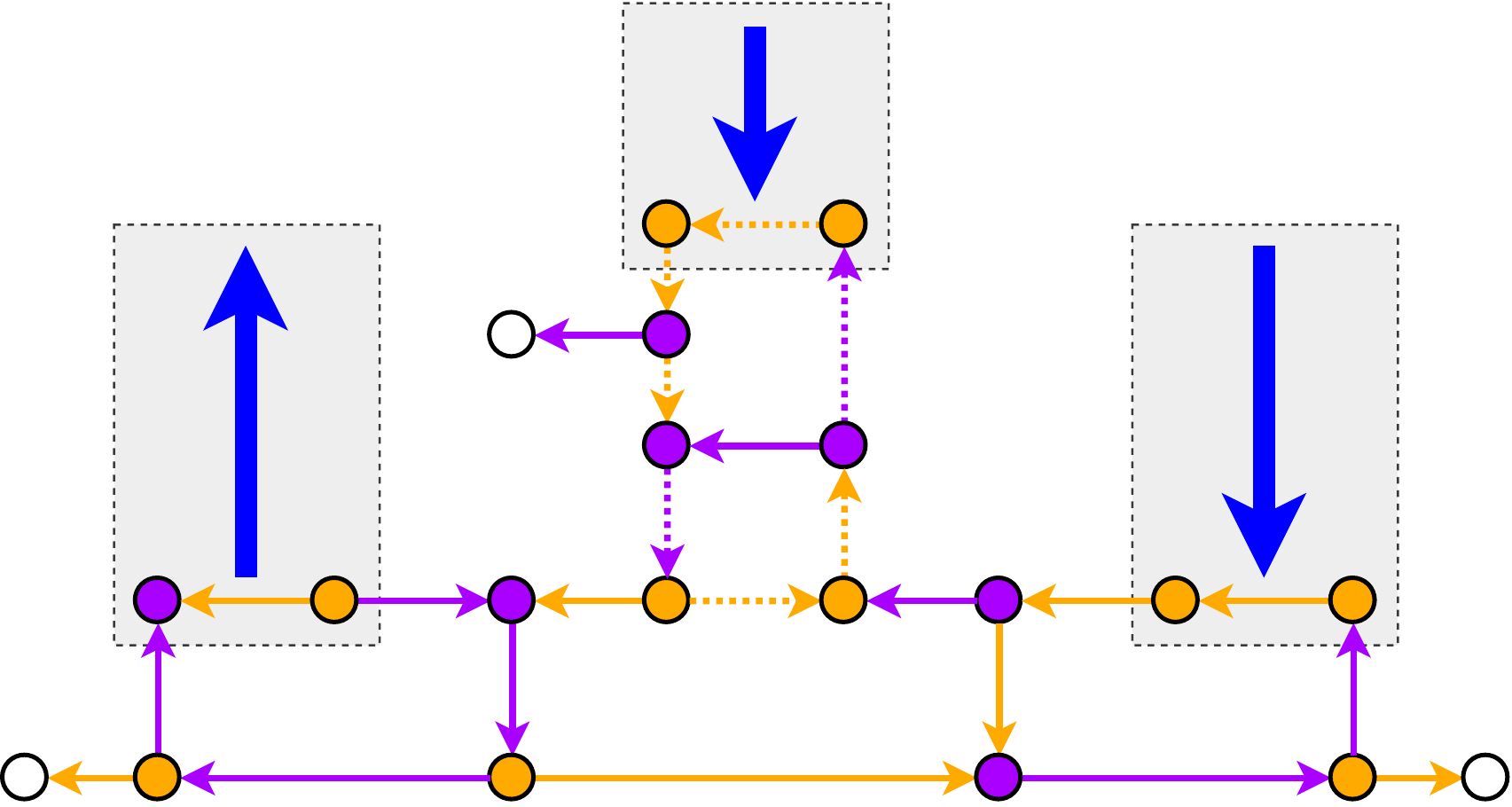}
      \caption{State 4: The left edge is pointing out. The middle and right edges are locked.\\}
    \end{subfigure}
    \begin{subfigure}{.49\linewidth}\centering
      \includegraphics[width=.95\linewidth]{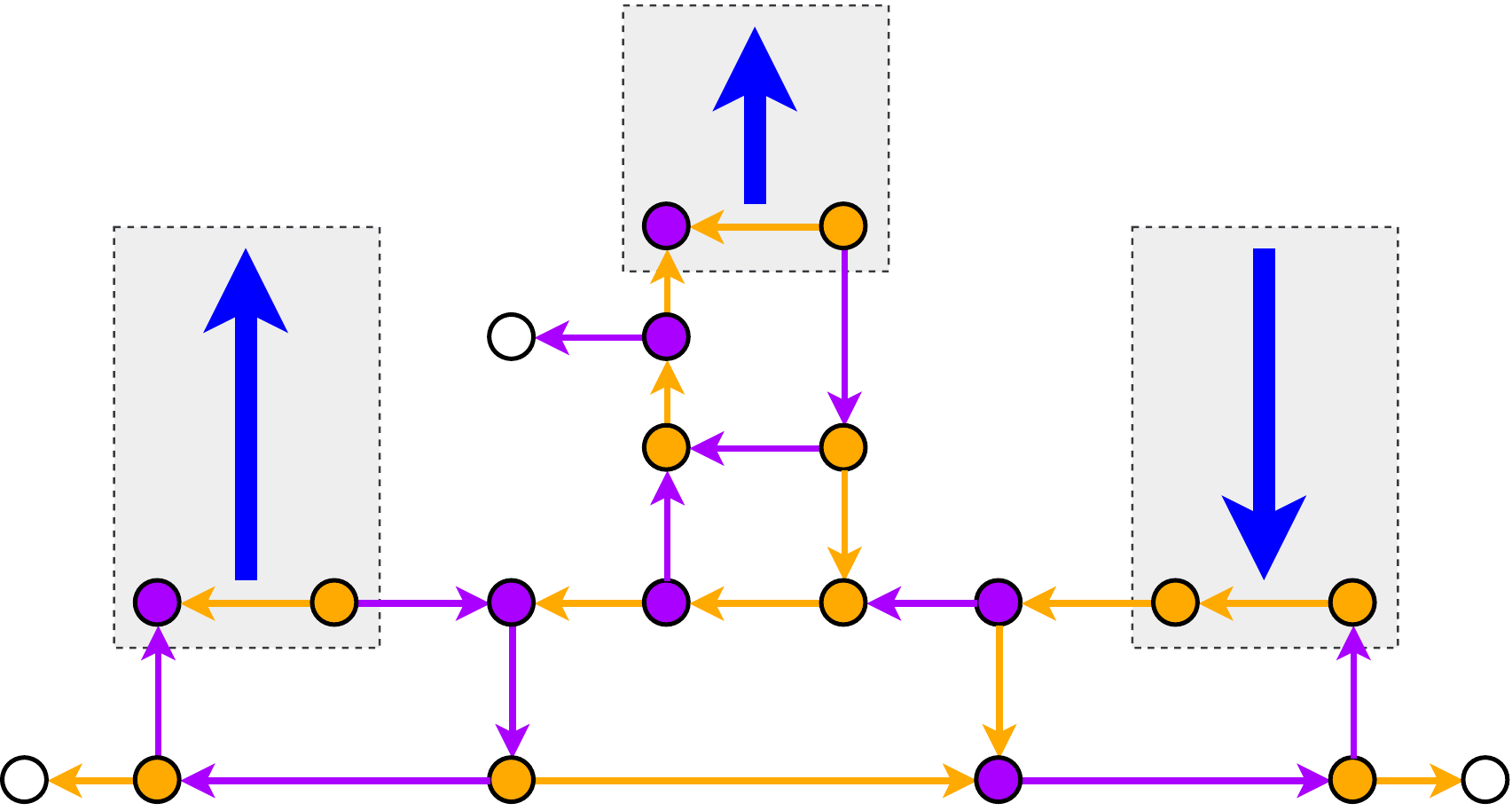}
      \caption{State 5: The left and middle edges are pointing out. The right edge is locked.}
    \end{subfigure}
    \caption{The five states of the protected OR vertex. The dotted edges show the cycle that is rotated to transition to the next state. Note that each state is defined only by which edged are locked in; the other unlocked edges can be either pointing in or out in each state.}
    \label{fig:Subway Shuffle or transitions}
    \end{figure}

  Our last gadget is the win gadget, shown in Figure~\ref{fig:Subway Shuffle win}. It is placed attached to the edge gadget corresponding to the target edge in the constraint logic instance, and allows the player to win the Subway Shuffle instance when that edge can be flipped. 

  In the first state shown, the target edge is pointing away. If the bubble arrives at the win gadget, it cannot accomplish anything. If the target edge is flipped so it now points toward the win gadget, we will be in the second state. Then the bubble can enter the win gadget at the top entrance and go around the indicated cycle, moving the special token one to the left. Finally, the bubble can enter at the bottom entrance to move the special token across to the target vertex.

  \begin{figure}
    \centering
    \begin{subfigure}{.4\linewidth}
      \includegraphics[width=\linewidth]{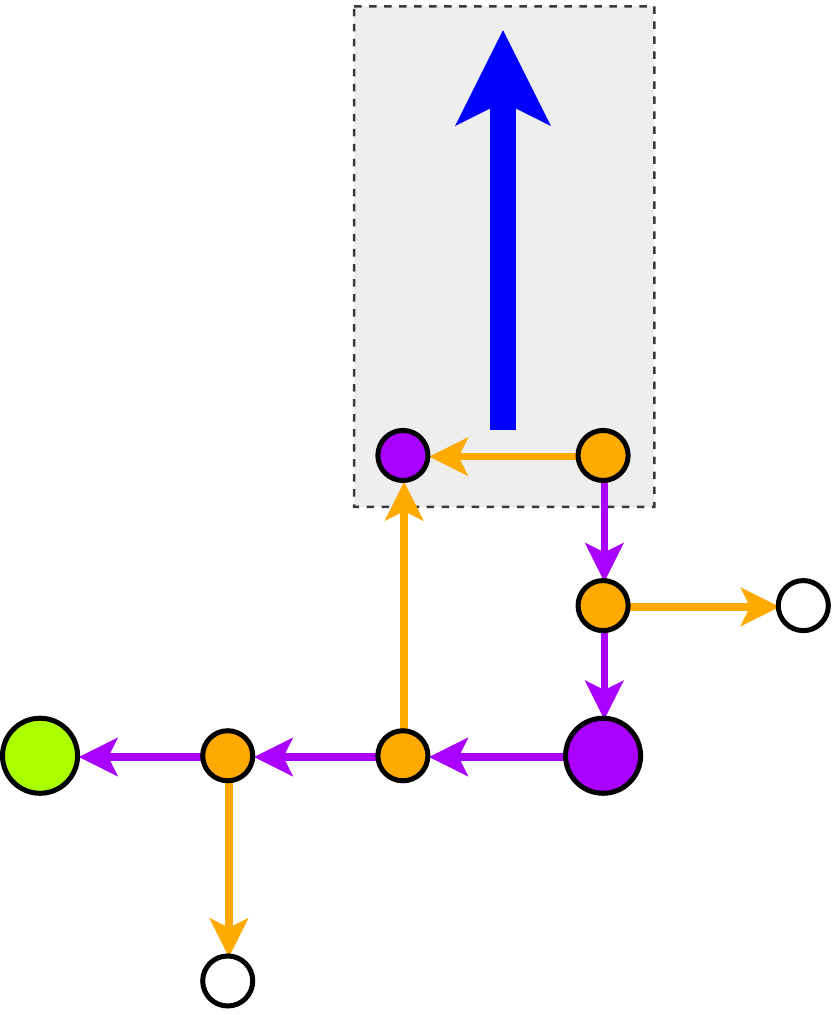}
      \caption{The locked win gadget. The bubble cannot do anything here.\\}
    \end{subfigure}
    \hfil
    \begin{subfigure}{.4\linewidth}
      \includegraphics[width=\linewidth]{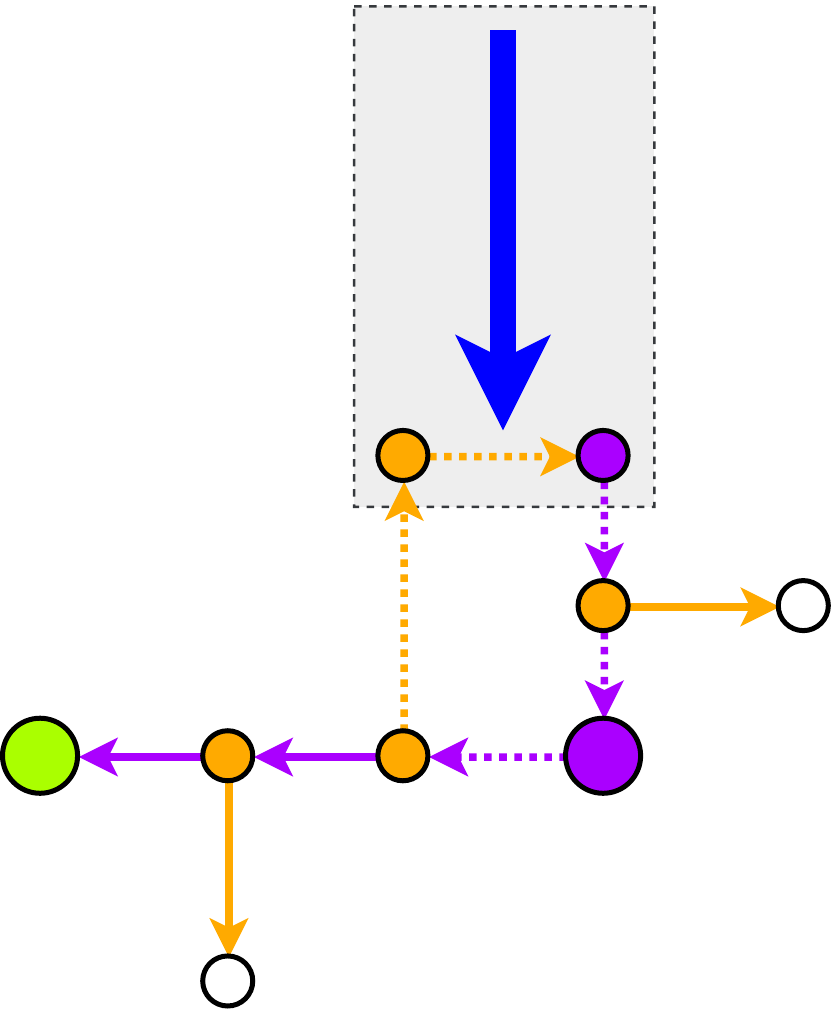}
      \caption{The unlocked win gadget. Now that the blue edge is pointing into the gadget, the indicated cycle can be rotated.}
    \end{subfigure}
    \caption{The win gadget for 2-color oriented Subway Shuffle. The target edge starts pointing away. If it is flipped, the bubble can enter the win gadget once at each entrance to move the (bottom right) purple special token to the (bottom left, green) target vertex. This gadget is based on the FINAL gadget in \cite{DBO}.}
    \label{fig:Subway Shuffle win}
    \end{figure}

  To allow the bubble to reach every gadget, we connect the entrances and exits of gadgets which are on the same face of the CL graph. This simply requires a tree connecting these vertices for each face. Each face other than the root of the spanning tree has exactly one edge exit on it; we orient the edges on that face to point towards this exit. The color of these edges does not matter, provided all vertices are valid and the token at the tail of an edge is the same color. For the face which is the root of the spanning tree, the tree connecting entrances has one vertex without a token, and the edges point towards it; this is where the bubble starts.

  Now we show how the gadgets prevent any moves other than the moves outlined above that simulate the NCL instance. First we consider the edge gadget.
  It is easy to check that while rotating any of the dotted cycles in an edge gadget, there are only two legal moves other than continuing the cycle. The first one is leaving through the exit vertex during the third cycle. This is equivalent to the bubble just using the throughway in the edge gadget to reach the rest of the spanning tree after turning only the first two cycles. By Lemma~\ref{partial edges}, this is never useful. The other legal move is while turning the first, fourth, or fifth cycle, it is possible for the bubble to move into the connecting vertex gadget through the shared vertices. We will show that nothing useful can be accomplished here when we consider the vertex gadgets. Similarly, it will also be possible for the bubble to come from a vertex and enter the edge gadget through the shared vertices. We show this is not useful in Lemma~\ref{no edge leakage}.
  
  \begin{lemma}
    \label{no edge leakage}
    It is never useful for the bubble to enter an edge gadget directly from a vertex gadget through the shared vertices.
  \end{lemma}
  \begin{proof}
    We need to check the up, down, up traversed, and down traversed configurations.
    
    In most configurations, there are no legal moves to enter the edge gadget from the shared vertices. The only configuration where this is possible is from the orange token on the top left of the upward pointing edge gadget. From here, it can move through a path of three tokens before it gets stuck. At that point, the only legal move is to undo the last three moves and exit the same way it entered.
    \end{proof}
  \begin{lemma}
    \label{partial edges}
    It is never useful to turn some of the cycles in an edge gadget without turning all of them.
  \end{lemma}
  \begin{proof}
    If you turn some of the cycles, but not all of them, then both ends of the edge gadget will be in the pointing outward configuration. For all of the vertex gadgets, there are no transitions that require the outward pointing configuration, so the edge gadget being in this configuration never lets you make a move that you could not make if you finished turning all of the cycles in an edge gadget.

    We also need to make sure that turning only some cycles, and then entering an edge gadget from a vertex gadget (as in Lemma~\ref{no edge leakage}), does not allow you to do anything. If we look at all of the partial edge configurations as shown in Figure~\ref{fig:Subway Shuffle edge transitions}, there is no way to access anything from any of these configurations. We also need to check the configurations that arise from partially rotating an edge and then traversing it. Since it is not possible to reach the traverse paths from entering from a vertex gadget, these configurations also do not let the bubble do anything else useful.
  \end{proof}

  Now we consider the AND gadget. Since the entire gadget is a single cycle, there is nothing the bubble can do within the gadget while turning the cycle. While turning the cycle, the bubble can try to enter an edge gadget through one of the shared vertices; however, we have already shown that the this is never useful in Lemma~\ref{no edge leakage}.

  We also need to consider if the bubble enters the vertex gadget from an edge on one of the shared vertices. It will never be able to move around the entire cycle because the orange vertex at the top will not be accessible. The only other thing the bubble can do is try to enter a different edge gadget, but we already showed this is not useful in Lemma~\ref{no edge leakage}.
  
  Now we consider the OR gadget. First we look at each of the four cycles. While turning the first cycle, the only legal move that is not continuing the cycle is moving the purple token just to the right of the cycle. However, from here, the only moves lead to dead ends so there is not anything useful for the bubble to do besides immediately return to the cycle. There are no other legal moves while turning the second cycle. While rotating the third cycle, it is possible for the bubble to reach the shared vertices of the leftmost edge gadget, but by Lemma~\ref{no edge leakage} this does not help. While rotating the fourth cycle, it is possible for the bubble to reach the shared vertices of the rightmost edge gadget, but again this does not help.

  Now we consider when the bubble enters the OR vertex gadget from an edge gadget through one of the shared vertices. In the first state, there are no legal moves after entering from the top or right edges. In the second state, from either the top or left edges it can enter and traverse most of the gadget but cannot complete any loop and thus cannot make progress by Lemma~\ref{no loops}. In the third state, the bubble has no legal moves after entering from the left or top edge. From the right edge it can traverse most of the gadget but cannot complete any loops. From the fourth state, again, while the bubble can traverse most of the gadget after entering from the top edge, it does not complete any loops so it has no effect. In the fifth state, there are no legal moves after entering the vertex gadget.
  
  \begin{lemma}
    \label{no loops}
    If the bubble takes any path from any vertex to the same vertex which does not complete a nontrivial loop, then the state of the Subway Shuffle instance must not have changed.
  \end{lemma}
  \begin{proof}
    If the bubble never completed a loop, then the only way for it to get back to where it started is to take the same path in reverse. By the definition of Subway Shuffle moves, this exactly undoes these moves returning the instance back to its original state.
    \end{proof}

  Finally, we check the win gadget. While using the win gadget, there are no legal moves other than completing the one loop. There is only one edge connected to the win gadget. If the bubble tries to enter the win gadget here, it cannot leave anywhere else or complete any loops, so by Lemma~\ref{no loops} it must return with no effect.

  Since the constraint logic graph is planar, the reduction yields a planar graph for 2-color oriented Subway Shuffle. Since the constructed instance Subway Shuffle is winnable exactly when the constraint logic instance is, and the reduction can clearly be done in polynomial time, this shows 2-color oriented Subway Shuffle is \PSPACE-hard. All of the gadgets used, including the trees connected gadget entrances, use only valid vertices, so it is still \PSPACE-hard with only valid vertices.
\end{proof}

\section{\onerush{} is \PSPACE-complete}\label{sec:rush hour PSPACE}

In this section, we show that \onerush{} is \PSPACE-complete by a reduction from 2-color oriented Subway Shuffle with only valid vertices and only a single empty vertex, which was shown to be \PSPACE-complete in the previous section. \onerush{} is played on a large square grid. We allow for fixed blocks, which are spaces marked impassable in the grid.

We will simulate Subway Shuffle vertices with individual cars at intersections, and edges as paths of cars. In general, purple edges and vertices will be horizontal cars, and orange edges and vertices will be vertical cars. Like in the Subway Shuffle, we will have a single \emph{bubble} which is a single empty space that moves around as cars move into that space.

We replace each vertex in our Subway Shuffle instance with a single car which is vertical if there is an orange token there, and horizontal if a purple token is there. Orange edges leading from a vertex attach to it as vertical rows of cars, and purple edges attach to a vertex as horizontal rows of cars. A degree-4 vertex with a purple token is depicted in Figure~\ref{fig:vertex}. Valid vertices can be embedded this way, with fixed blocks on the unused sides for lower degree vertices.

\begin{figure}
  \centering
  \begin{subfigure}{.3\linewidth}
    \centering
    \includegraphics[width=\linewidth]{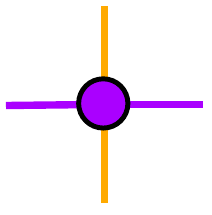}
    \label{subfig:ss vertex}
  \end{subfigure}
  \hfil
  \begin{subfigure}{.3\linewidth}
    \centering
    \includegraphics[width=\linewidth]{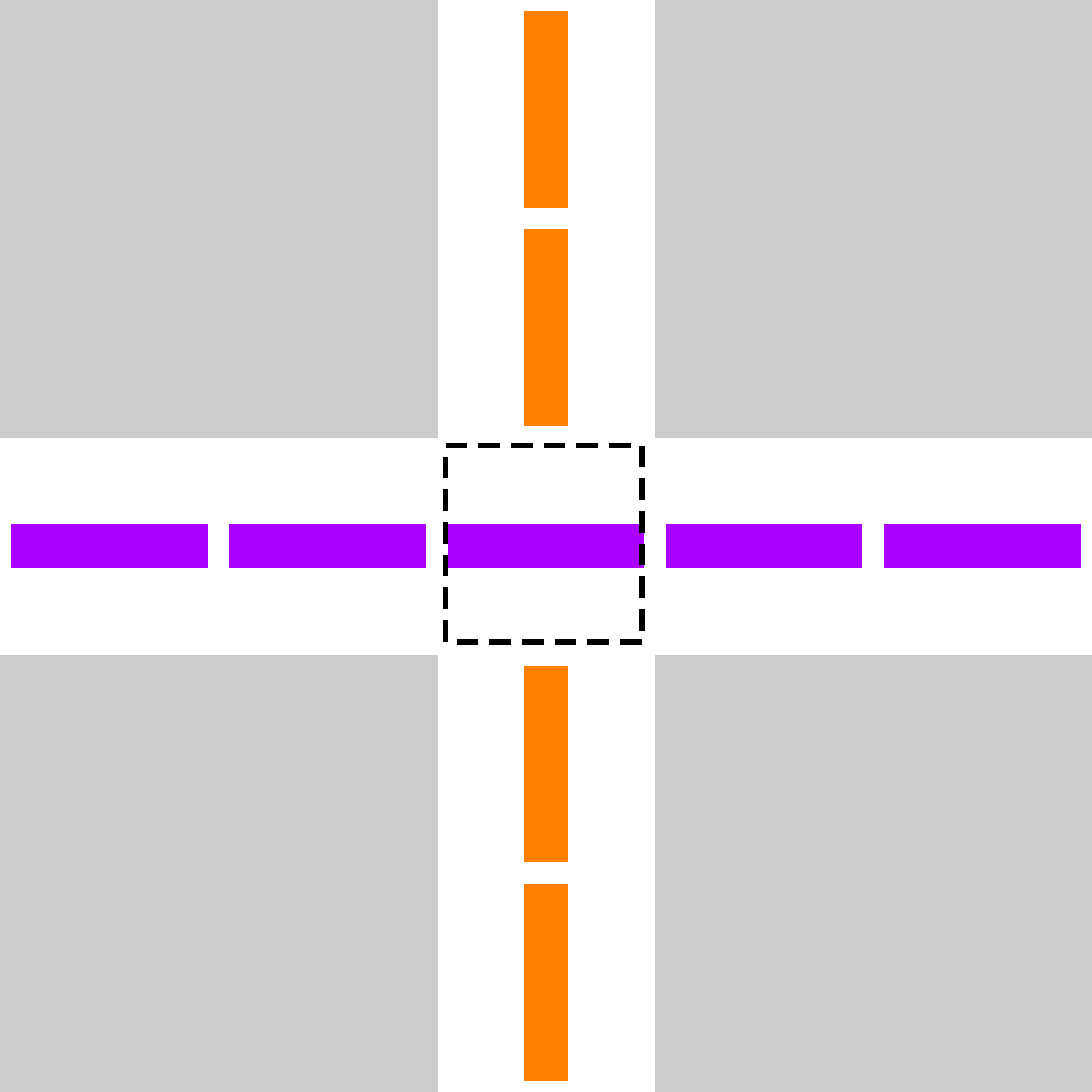}
    \label{subfig:rh vertex}
  \end{subfigure}
  \caption{A degree-4 Subway Shuffle vertex embedded in Rush Hour. Note that, while this is not a valid Subway Shuffle vertex, all valid vertices are subsets of this vertex. Individual dashes represent cars. A line of cars of one color represents a Subway Shuffle edge of that color. The center boxed car represents the Subway Shuffle vertex.}
  \label{fig:vertex}
\end{figure}

A Subway Shuffle edge is simulated by a path of cars which can make right-angle turns, allowing us to embed an arbitrary planar Subway Shuffle graph.  The direction of a car at a turn in an edge defines which way the Subway Shuffle edge is oriented. A purple edge which points right is depicted in Figure~\ref{fig:wire}. In order to maintain the directionality of edges, each edge must be simulated by a path with at least one turn.
\begin{figure}
  \centering
  \begin{subfigure}{.3\linewidth}
    \centering
    \includegraphics[width=\linewidth]{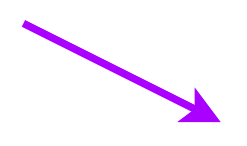}
    \label{subfig:ss wire}
  \end{subfigure}
  \hfil
  \begin{subfigure}{.3\linewidth}
    \centering
    \includegraphics[width=\linewidth]{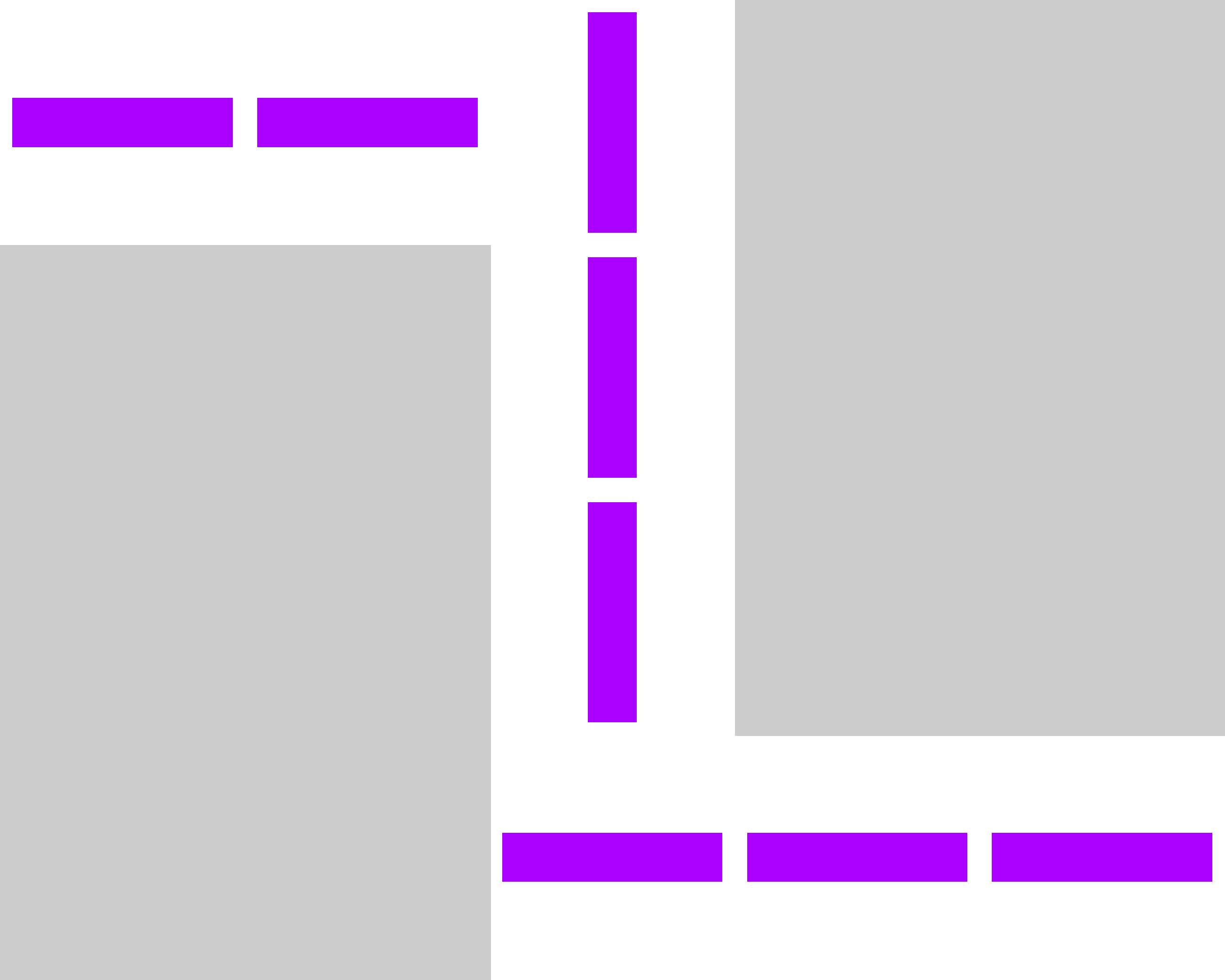}
    \label{subfig:rh wire}
  \end{subfigure}
  \caption{A Rush Hour simulation of a Subway Shuffle edge. This is a purple edge which points right.}
  \label{fig:wire}
\end{figure}

To make a move, suppose the bubble is currently at a vertex. To move a token in from an adjacent vertex, a car from the connecting edge is moved in. Then cars from that edge are all moved one space toward the initial vertex, until finally we can move the car in the second vertex out. Note that this process reverses the orientation of the edge as desired. If the edge was pointed in the correct direction, then this process will succeed; if the edge is oriented in the wrong direction, then this process will fail when we try to turn a corner in the edge. Similarly it is impossible to move a token along an edge of the opposite color, because it will be unable to move out of its vertex. An example of a single Subway Shuffle move where an orange token is moved up along an orange edge embedded in Rush Hour is shown in Figure~\ref{fig:rush_hour_move}.

No other useful actions can be taken. If the bubble is not currently at a vertex, then there are at most two possible moves. One of them would just be undoing the previous move, and the other would be continuing the process of moving a token along an edge. When the bubble is at a vertex, moving any adjacent car into the vertex is the same as starting the process of moving a Subway Shuffle token along the corresponding edge.

\begin{figure}
  \centering
  \begin{subfigure}{.3\linewidth}
    \centering
    \includegraphics[width=\linewidth]{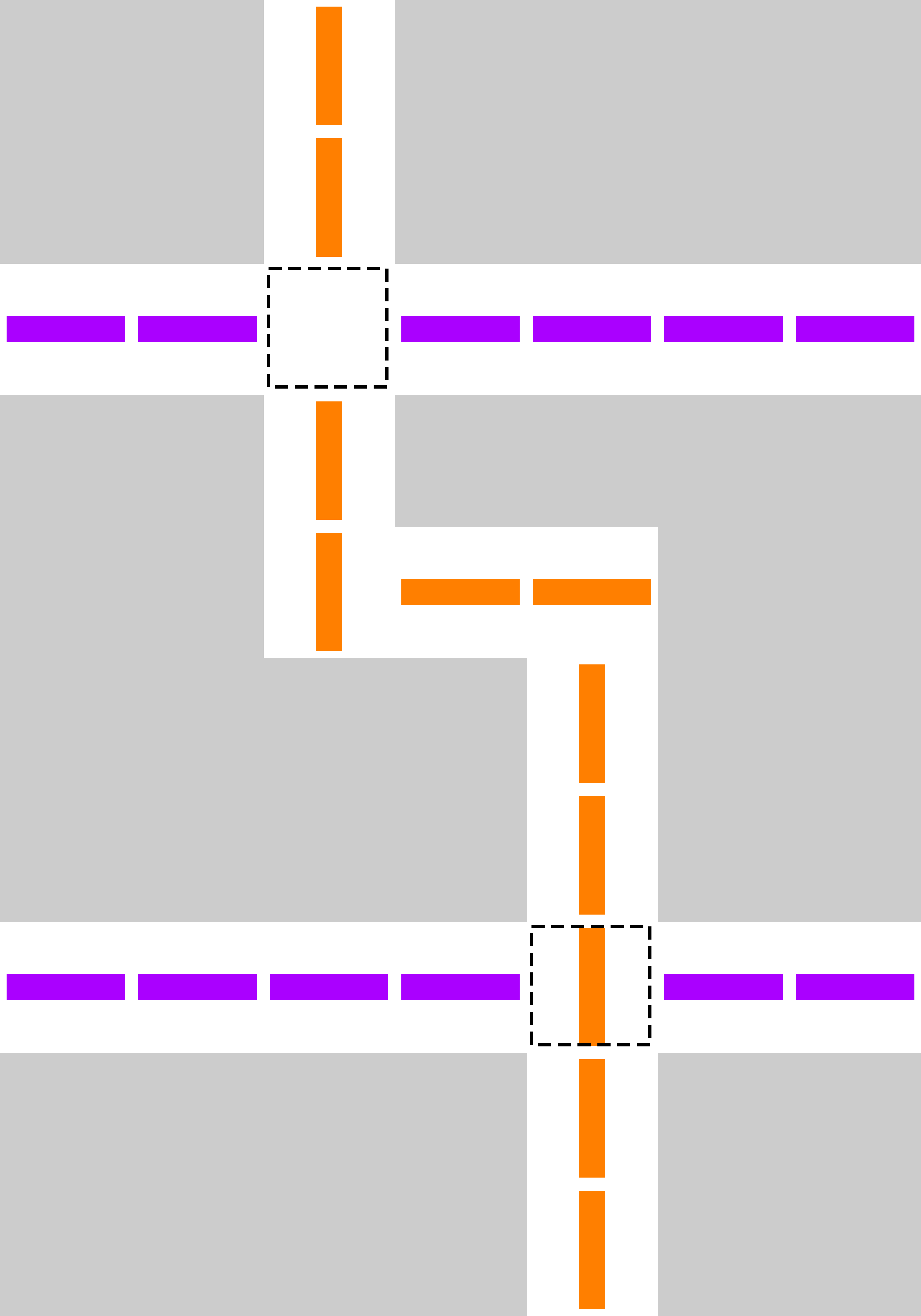}
    \caption{Before moving the orange token up.}
    \label{subfig:move_before}
  \end{subfigure}
  \hfil
  \begin{subfigure}{.3\linewidth}
    \centering
    \includegraphics[width=\linewidth]{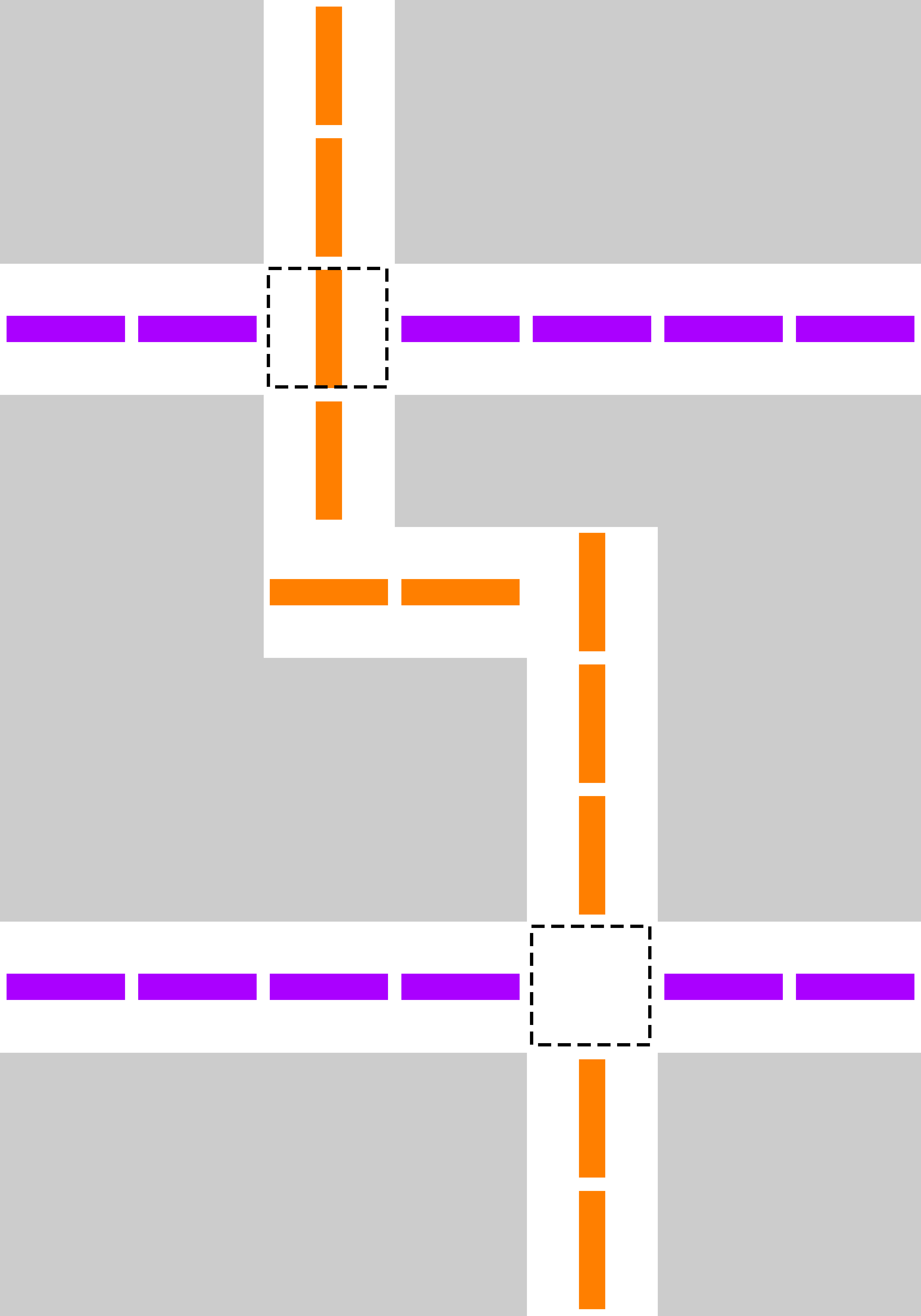}
    \caption{After moving the orange token up.}
    \label{subfig:move_after}
  \end{subfigure}
  \caption{Moving a single orange token in Subway Shuffle
           when simulated by Rush Hour in Figure~\ref{fig:wire}.}
  \label{fig:rush_hour_move}
\end{figure}

The win condition of a Rush Hour instance is allowing the marked car to escape the grid. The win gadget needs to be specified more precisely because Subway Shuffle tokens do not correspond exactly to Rush Hour cars. Also, we want to make sure that everything can fit within a grid so our win condition is actually located near the edge.
\begin{figure}
  \centering
  \includegraphics[width=.3\linewidth]{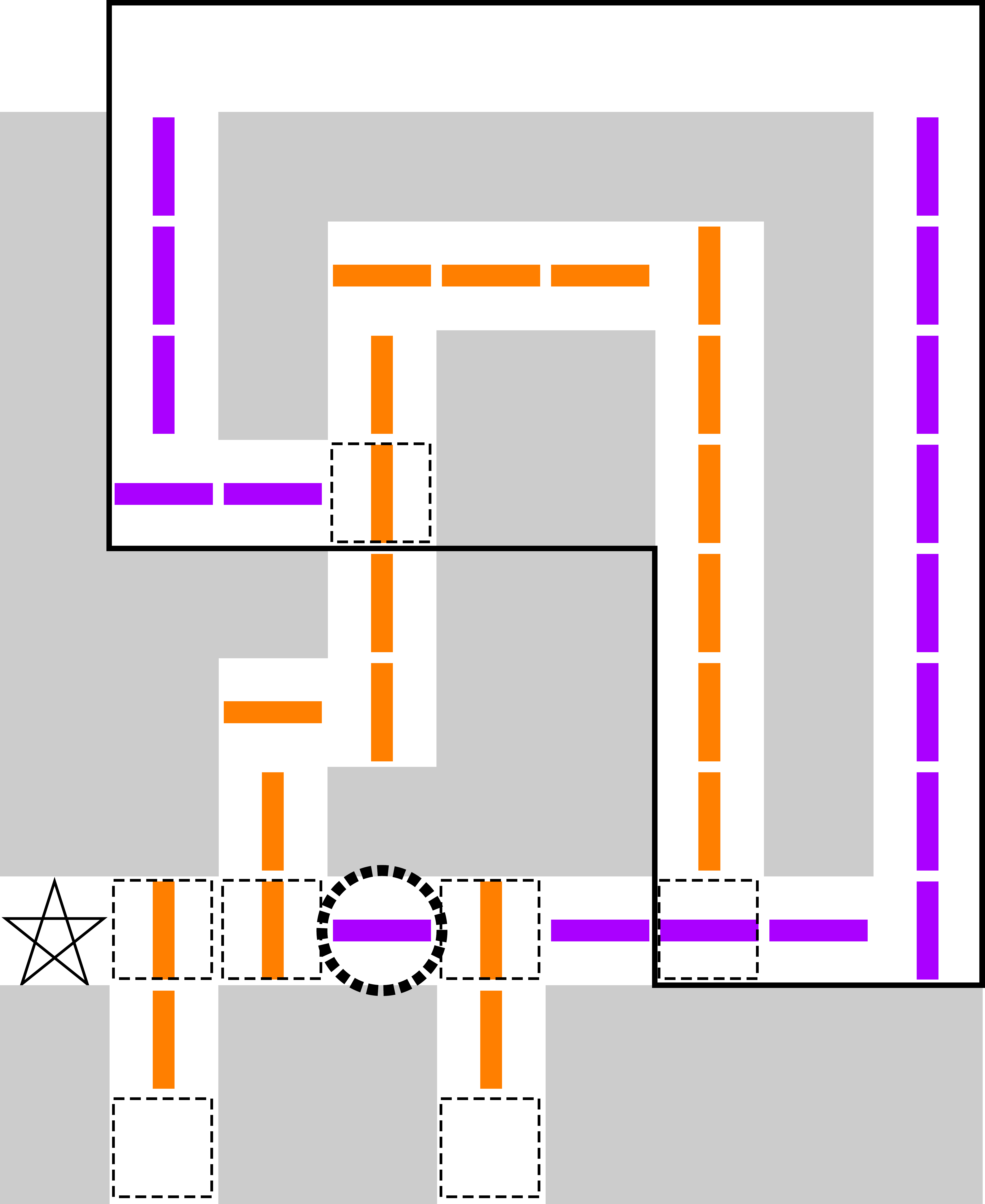}
  \caption{The cycle of the subway shuffle win gadget embedded in Rush Hour. The goal is to get the circled car to the star. The boxed cars are the vertices in the Subway Shuffle win gadget. The two lines of purple cars extending upward are the purple edges of the connected edge gadget. Everything inside the solid black line is part of the connecting edge gadget.}
  \label{fig:win}
\end{figure}

Our win gadget is depicted in Figure~\ref{fig:win}. The win condition is the circled car reaching the star. The boxed cars represent Subway Shuffle vertices. In order to win, first the boxed orange car directly in front of the circle car must leave by rotating this cycle. This represents the marked token in the Subway Shuffle vertex moving to the middle vertex along the bottom of the win gadget. Then, the leftmost orange line must be moved down one space, clearing the way for the marked car to leave.

In Rush Hour, because winning requires a car leaving the grid, we must also take care to make sure that the win gadget is at the boudary of our construction, and not somewhere buried in the middle. To do this, we conisder the CL edge which is part of the win gadget. Since the CL graph is planar, we can consider one of the faces that this edge is a part of, and make this face the ``outside'' face. Now our win gadget is at the boundary, which is what we needed.

\section{Open Problems} \label{sec:conclusion}

In this paper, we have shown that {\onerush} with fixed blocks is
\PSPACE-complete, solving Tromp and Cilibrasi's open problem \cite{TC}.
It remains whether the assumption of fixed blocks can be eliminated,
and thereby solve the open problem of Hearn, Demaine, and Tromp \cite{HD05,TC}. We note that it is impossible to perfectly simulate a fixed block using Rush Hour cars, since for any arrangement of cars in a region, there must be at least one point along the boundary of the region that, if it were empty, a car can exit the region. For a single bubble, it gets worse than that. Let a space be \emph{accessible} if the bubble can ever reach that space. By Theorem~\ref{bubble access}, the accessible region is always a rectangle. Since we can ignore anything inaccessible, we can just assume that everywhere in the entire Rush Hour grid is accessible. Because the bubble can get everywhere, it seems impossible to modify the gadgets in our proof in any simple way to constrain the bubble from wandering freely inside and between the cycles in gadgets.

  \begin{theorem}
    \label{bubble access}
  In any {\onerush} instance with no fixed blocks with only a single ``bubble,'' the set of accessible spaces is a rectangle.
\end{theorem}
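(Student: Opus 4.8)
The plan is to characterize the set $A$ of accessible spaces through the cars sitting on the \emph{inaccessible} cells, and then argue purely combinatorially that $A$ can have no concave corner. First I would record a \emph{frozen-car} observation: if a cell $X$ is never reached by the bubble, then the car occupying $X$ never moves. Indeed, a car can leave a cell only by sliding into an adjacent bubble, which would place the bubble onto $X$, and a car can enter $X$ only while the bubble already occupies $X$; since neither event happens for an inaccessible $X$, the occupant of $X$ is a single car of a fixed orientation throughout the entire game. Thus every inaccessible cell carries a well-defined permanent orientation, horizontal or vertical.

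Next I would prove a local blocking lemma. Suppose $X$ is inaccessible and its frozen car is horizontal. If either horizontal neighbor of $X$ were accessible, then at the moment the bubble reached that neighbor the horizontal car on $X$ could slide into the bubble, moving the bubble onto $X$ and contradicting inaccessibility; hence both horizontal neighbors of $X$ are inaccessible. Symmetrically, a frozen vertical car forces both vertical neighbors of $X$ to be inaccessible. The upshot is that an inaccessible cell can have accessible neighbors only along the single axis orthogonal to its car's orientation; in particular, an inaccessible cell never has two accessible neighbors on perpendicular sides.

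Finally I would convert this into rectangularity. The region $A$ is edge-connected, since the bubble reaches each accessible cell by a walk through orthogonally adjacent cells, so the boundary of $A$ is a disjoint union of rectilinear closed curves. A concave (reflex) corner of this boundary would sit at an interior lattice point where exactly three of the four surrounding cells are accessible and the fourth, say $X$, is not; the two cells edge-adjacent to $X$ among those three would then be accessible neighbors of $X$ on perpendicular sides, which the blocking lemma forbids (the same contradiction rules out a diagonal pinch, where the two accessible cells among the four meet only at the point). Hence $A$ has no reflex corner. A nonempty edge-connected polyomino whose boundary has no reflex corner can have no hole (a hole forces at least four reflex corners) and its outer boundary, having only convex corners, must have exactly four of them; therefore $A$ is an axis-aligned rectangle. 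I expect the main obstacle to be the frozen-car observation and its careful use: everything downstream is a short local case check together with the standard rectilinear corner count, but the whole argument hinges on correctly establishing that inaccessible cells are exactly immovable single cars, and on distilling the orientation constraints into the clean ``no two perpendicular accessible neighbors'' statement that excludes reflex corners.
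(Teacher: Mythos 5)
Your proposal is correct and follows essentially the same argument as the paper: the key step in both is that an inaccessible cell cannot have accessible neighbors on two perpendicular sides (since whatever its car's orientation, it could slide into one of them), which rules out any non-rectangular boundary. Your write-up just makes explicit the frozen-car observation and the reflex-corner counting that the paper leaves implicit.
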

\begin{proof}
  The accessible region is clearly connected. If it is not a rectangle, there must be a corner on the boundary of the accessible region where two accessible spaces are adjacent to the same inaccessible space, as in Figure~\ref{fig:corner access}. Then regardless of its orientation, the car in this inaccessible space must be able to move into one of these two accessible spaces, and thus is also accessible. This is a contradiction, so the accessible region must be a rectangle.
\end{proof}

\begin{figure}
  \centering
  \begin{subfigure}{.2\linewidth}
    \includegraphics[width=.99\linewidth]{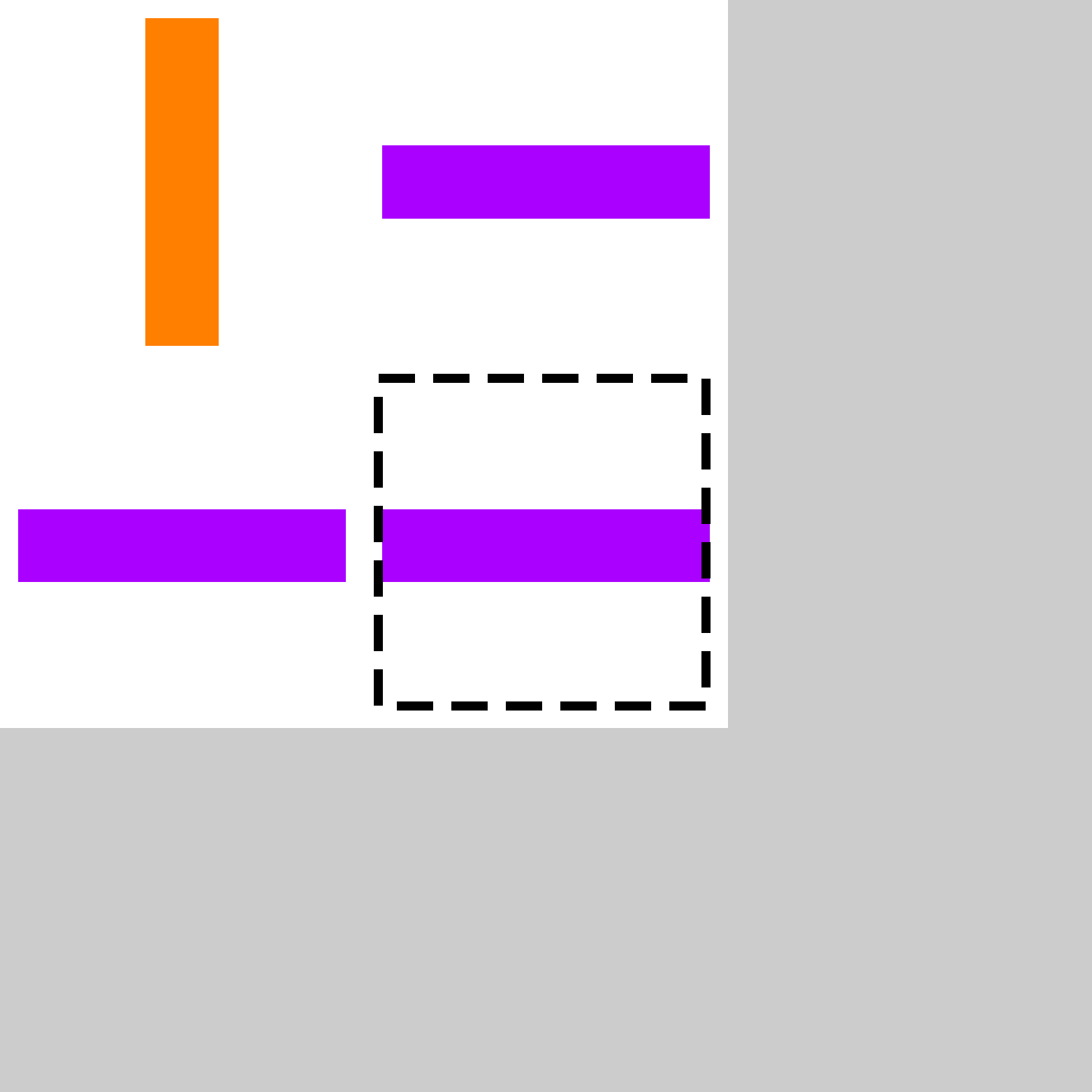}

    \end{subfigure}
  \caption{Let the gray area be accessible by the bubble. Then the boxed car is at the corner of the boundary of the accessible region, and regardless of its orientation it must also be accessible by the the bubble.}
  \label{fig:corner access}
\end{figure}
   



\section*{Acknowledgments}

We thank the many colleagues over the years for their early collaborations in
trying to resolve the $1 \times 1$ Rush Hour problem
(when E. Demaine mentioned it to various groups over the years):
Timothy Abbott, Kunal Agrawal, Reid Barton, Punyashloka Biswal, Cy Chen, Martin Demaine, Jeremy Fineman, Seth Gilbert, David Glasser, Flena Guisoresac, MohammadTaghi Hajiaghayi, Nick Harvey, Takehiro Ito, Tali Kaufman, Charles Leiserson, Petar Maymounkov, Joseph Mitchell, Edya Ladan Mozes, Krzysztof Onak, Mihai P\v{a}tra\c{s}cu, Guy Rothblum, Diane Souvaine, Grant Wang, Oren Weimann, Zhong You
(MIT, November 2005);
Jeffrey Bosboom, Sarah Eisenstat, Jayson Lynch, and Mikhail Rudoy (MIT 6.890, Fall 2014);
and
Joshua Ani, Erick Friis, Jonathan Gabor, Josh Gruenstein, Linus Hamilton, Lior Hirschfeld, Jayson Lynch, John Strang, Julian Wellman
(MIT 6.892, Spring 2019, together with the present authors).

\bibliographystyle{alpha}
\bibliography{paper}
\end{document}